\newtheorem{theorem}{Theorem}[section]
\newtheorem{lemma}[theorem]{Lemma}
\newtheorem{corollary}[theorem]{Corollary}
\title{Liar's Domination in Unit Disk Graphs\thanks{Preliminary version of this paper appeared in COCOON, 2018}}
\author[1]{Ramesh K. Jallu \thanks{jallu@cs.cas.cz}\thanks{Ramesh K. Jallu was supported by the Czech Science Foundation, grant number GJ19-06792Y, and by institutional support
RVO:67985807}}
\author[2]{Sangram K. Jena \thanks{sangram@iitg.ac.in}}
\author[2]{Gautam K. Das \thanks{gkd@iitg.ac.in}}
\affil[1]{The Czech Academy of Sciences, Institute of Computer Science}
\affil[2]{Department of Mathematics\\Indian Institute of Technology Guwahati }
\begin{document}

\maketitle

\begin{abstract}
In this article, we study a variant of the minimum dominating set problem known as the  
minimum liar's dominating set (MLDS) problem. We prove that the MLDS problem is 
NP-hard in unit disk graphs. Next, we show that the recent sub-quadratic time 
$\frac{11}{2}$-factor approximation algorithm \cite{bhore} for the MLDS problem 
is erroneous and propose a simple $O(n + m)$ time 7.31-factor approximation algorithm, 
where $n$ and $m$ are the number of vertices and edges in the input unit disk graph, 
respectively. Finally, we prove that the MLDS problem admits a polynomial-time approximation scheme.

\noindent{\bf keywords:} Dominating set, Liar's dominating set, Unit Disk Graph, Approximation scheme
\end{abstract}
%\vspace{-.75cm}
\section{Introduction}
%\vspace{-.25cm}
Given a simple undirected graph $G=(V,E)$, the open and closed neighborhoods
of a vertex $v_i \in V$ are defined by
 $N_G(v_i) = \{v_j\in V \mid (v_i,v_j)\in E\ and\ v_i \neq v_j\}$ 
and $N_G[v_i] = N_G(v_i) \cup \{v_i\}$, 
respectively. A \emph{dominating set} $D$ of $G$ is a subset of $V$ such
that every vertex in $V\setminus D$ is adjacent to at least one
vertex in $D$. That is, each vertex $v_i \in V$ is either in $D$ or
there exists a vertex $v_j\in D$ such that $(v_i,v_j)\in E$. Observe that 
for any dominating set $D \subseteq V$, $|N_G[v_i]\cap D| \geq 1$ for each $v_i \in V$. 
We say that a vertex $v_i$ is dominated by $v_j$ in $G$, if $v_j \in D$ and $(v_i,v_j)\in E$.
The dominating set problem asks to find a dominating set of minimum size in a given graph.
A set $D\subseteq V$ is a {\it $k$-tuple dominating set} in $G$, if each
vertex $v_i \in V$ is dominated by at least $k$ vertices in $D$. In other words, 
$|N_G[v_i]\cap D| \geq k$ for each $v_i \in V$. 
The minimum cardinality of a $k$-tuple dominating set of a graph $G$ is called the 
\emph{$k$-tuple domination number} of $G$.

A \emph{liar's dominating set} (LDS) in a simple undirected graph $G=(V,E)$, is a 
dominating set $D$ having the following two properties: (i) for every
$v_i \in V$, $|N_G[v_i]\cap D| \geq 2$, and (ii) for every
pair of distinct vertices $v_i$ and $v_j$ in $V$, $|(N_G[v_i]\cup N_G[v_j])\cap D| \geq 3$.
For a given graph $G$, the problem of finding an LDS  
in $G$ of minimum cardinality is known as the \emph{minimum liar's 
dominating set} (MLDS) problem. The cardinality of an MLDS in a graph $G$ 
is known as the liar's domination number of $G$. Every 3-tuple dominating set is a
liar's dominating set as it satisfies both conditions, so the liar's domination
number lies between 2-tuple and 3-tuple domination numbers.

Our interest in the LDS problem arises from the following scenario.
Consider a graph in which each node is a possible location 
for an intruder such as a thief, or a saboteur. We would like to detect and 
report the intruder's location in the graph. A protection device such as a camera or a sensor 
placed at a node can not only detect (and report) the intruder's presence at it, but also 
at its neighbors. Our objective is to place a minimum number of protection devices such that 
the intrusion of the intruder at any vertex is detected
and reported. In this situation, one 
must place the devices at the vertices of a minimum dominating set of the graph
to achieve the goal. The protection devices are prone to failure and hence 
certain degree of redundancy is needed in the solution. Also, some times the devices
 may misreport the intruder's location deliberately or due to transmission error. Assume that
at most one protection device in the closed neighborhood of the intruder can lie
(misreport). In this context, 
one must place the protection devices at the vertices of an MLDS of the graph to 
achieve the objective. 
The first property in the definition of LDS deals with single device fault-tolerance, while 
the second property deals with the case in which two distinct locations about the intruder are reported.

%\vspace{-0.5cm}
\section{Related Work}
%\vspace{-0.25cm}

The MLDS problem is introduced by Slater \cite{slater}.
He showed that the problem is NP-hard for general graphs, and gave a lower 
bound on the liar's domination number in case of trees by proving that 
the size of any liar's dominating set of a tree of order $n$ is between
$\frac{3}{4}(n+1)$ and $n$. 
Later, Roden and Slater \cite{roden}
characterized tree classes with liar's 
domination number equal to $\frac{3}{4}(n+1)$. In the same paper, they also 
showed that the MLDS problem is NP-hard even for bipartite graphs. 
Panda and Paul \cite{panda2013liar} proved that the problem is NP-hard
for split graphs and chordal graphs. They also proposed a
linear time algorithm for computing an MLDS in case of trees. 

Panda et al. \cite{panda2015hardness} studied the approximability
of the problem and presented an $O(\ln \Delta)$-factor 
approximation algorithm, where $\Delta$ is the
degree of the graph. Panda and Paul \cite{paul2013}
considered the problem for proper interval graphs and proposed
a linear time algorithm for computing a minimum cardinality
liar's dominating set. The problem is also studied for bounded
degree graphs, and $p$-claw free graphs \cite{panda2015hardness}.
Sterling \cite{sterling} considered the problem on two-dimensional 
grid graphs and presented bounds on the liar's domination number. 

Alimadadi et al. \cite{alimadadi}
provided the characterization of graphs and trees for which the liar's
domination number is $|V|$ and $|V|-1$, respectively. 
Panda and Paul \cite{panda2013connected,panda2014hardness} studied 
variants of liar's domination, namely,
connected liar's domination and total liar's domination. 
A \emph{connected liar's dominating set} (CLDS) is an LDS whose induced 
subgraph is connected. A \emph{total liar's dominating set} (TLDS) is a 
dominating set $D$ with the following two properties: (i) for every
$v \in V$, $|N_G(v)\cap D| \geq 2$, and (ii) for every distinct
pair of vertices $u$ and $v$, $|(N_G(u)\cup N_G(v))\cap D| \geq 3$, where 
$N_G(\cdot)$ is the open neighborhood of a vertex. The objective of 
both problems is to find CLDS and TLDS of minimum size, respectively.
The authors also proved that both problems are NP-hard and proposed  
$O(\ln \Delta)$-factor approximation algorithms. They also proved 
that the problems are APX-complete for graphs with maximum degree 4. 
Jallu and Das \cite{jallu2017liar} first studied the geometric version of the MLDS problem, and presented constant factor approximation algorithms with high running time. 
Recently, Banerjee and Bhore \cite{bhore} proposed a $\frac{11}{2}$-factor approximation algorithm in sub-quadratic time. However, unfortunately, their approximation analysis is erroneous and the approximation factor is at least 11 (refer Section \ref{appxalgo}).
%\vspace{-0.3cm}
\subsection{Our Contribution} \label{contribution}
%\vspace{-0.2cm}

We study the MLDS problem on a geometric intersection graph model, particularly in UDGs. 
A \emph{unit disk graph} (UDG) is an intersection graph of equal radii disks in 
the plane. Given a set $\{d_1, d_2,\ldots, d_n\}$ of $n$ circular disks in the plane, 
each having radius 1, the corresponding UDG $G = (V,E)$ is defined as follows: each vertex $v_i \in V$ corresponds to a disk $d_i$, and there is an edge between two vertices $v_i$ and $v_j$ if and only if 
the Euclidean distance between the corresponding disk centers $d_i$ and $d_j$ is at most 1. 

We show that the decision version of the MLDS problem is NP-complete in UDGs (refer to Section \ref{hardness}). We propose a simple linear time 7.31-factor approximation algorithm and a PTAS in Section \ref{appxalgo} and Section \ref{ptas}, respectively. Finally, we conclude the paper in Section \ref{conclusion}.

%\vspace*{-0.35cm}
\section{Hardness of the MLDS Problem in UDGs}\label{hardness}
%\vspace{-.25cm}
In this section, we show that the MLDS problem in UDGs is NP-complete by 
reducing the {\it vertex cover} problem 
defined in planar graphs to it, which is known to be NP-complete \cite{garey}. 
The decision versions of both the problems are formally defined below.
\begin{description}
 \item [The MLDS problem in UDGs] (\textsc{Lds-Udg})
 \item [Instance:] A unit disk graph $G =(V,E)$ and a positive integer $k$.
 \item [Question:] Does there exist a liar's dominating set  $D$ in $G$ such that $|D|\leq k$?.
\end{description}
\begin{description}
 \item [The vertex cover problem in planar graphs] (\textsc{Vc-Pla})
 \item [Instance:] A simple planar graph $G$ with maximum degree 3 and a positive integer $k$.
 \item [Question:] Does there exist a vertex cover $C$ of $G$ such that $|C|\leq k$?.
\end{description}
 
\begin{lemma}[\cite{valiant}] \label{key-lemma}
 A planar graph $G=(V,E)$ with maximum degree 4 can be embedded in the plane
 using $O(|V|^2)$ area in such a way that its vertices are at integer
 coordinates and its edges are drawn so that they are made up of line
 segments of the form $x=i$ or $y=j$, for integers $i$ and $j$. 
\end{lemma}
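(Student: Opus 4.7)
The plan is to combine a polynomial-area planar straight-line embedding with a rectilinear (orthogonal) routing of edges that exploits the maximum-degree-$4$ constraint. The starting point is a classical planar grid embedding result---for instance, Schnyder's theorem or the de Fraysseix--Pach--Pollack construction---which produces a straight-line planar embedding of $G$ on an $O(|V|)\times O(|V|)$ integer grid. This already fixes the area budget of $O(|V|^2)$ and guarantees that no two edges cross.

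Next, I would turn this straight-line drawing into an orthogonal one. Using the planar rotation system, I would assign each incident edge at a vertex to one of the four cardinal directions (north, south, east, west) consistently with the cyclic order of edges around the vertex; this is feasible precisely because the maximum degree is $4$, so the number of incident edges never exceeds the number of available directions. I would then replace every straight-line edge $uv$ by a rectilinear path of horizontal and vertical segments, routed through ``channels'' between grid rows and columns that are initially empty. To ensure that different rectilinear paths do not overlap, and that no path passes through the interior of a non-incident vertex, I would blow up the grid by a constant multiplicative factor, so that each original unit cell becomes a $c\times c$ block. This preserves the $O(|V|^2)$ area bound, keeps vertex coordinates integral, and leaves every edge made up of segments of the form $x=i$ or $y=j$.

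The main obstacle is the second step: the edge routing must be carried out globally so that rectilinear paths belonging to distinct edges do not collide with each other or cross through a non-incident vertex. Establishing a clean channel discipline for the reserved rows and columns, and verifying that the cyclic direction assignment chosen around each vertex is compatible with some crossing-free global route, is where the delicate combinatorial work lies; this is essentially an orthogonal-drawing construction in the style of Tamassia or Kant. Once this routing step is in place, the required properties---integer vertex coordinates, axis-aligned edge segments, and $O(|V|^2)$ total area---follow immediately from the grid scaling and the area bound of the underlying straight-line embedding.
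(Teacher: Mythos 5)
The paper offers no proof of this statement at all: it is imported verbatim as a known result of Valiant \cite{valiant}, so the only question is whether your argument would stand on its own. It does not, because of a concrete gap in the routing step. Your plan is to rectilinearize a straight-line grid drawing and claim that a constant multiplicative blow-up of the grid (each unit cell becoming a $c\times c$ block) suffices to give every replacement path its own collision-free channel. That claim is false in general: in a straight-line planar drawing on an $O(|V|)\times O(|V|)$ grid, a single unit cell can be traversed by $\Theta(|V|)$ distinct edges (take many long, nearly parallel edges squeezed through a narrow region), so no constant $c$ provides each of them a private horizontal or vertical track inside that cell. The maximum-degree-$4$ hypothesis bounds the number of edges incident to a vertex --- which is what makes the four cardinal ports available --- but it does not bound the number of edges passing \emph{through} a region of the drawing, and it is the latter quantity that governs how much local blow-up a channel-routing scheme needs.

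This is why the actual proofs of this lemma do not proceed by post-processing a straight-line drawing. Valiant's original argument is a recursive, separator-based layout; the later constructions of Tamassia--Tollis (via visibility representations) and Kant and Biedl--Kant (via canonical or $st$-orderings, the latter being the variant this paper actually invokes for the two-bends property) build the orthogonal drawing directly, placing vertices and reserving rows and columns for edges as they go, precisely so that the number of edges sharing any track is controlled by construction. Your proposal defers the hard step to ``an orthogonal-drawing construction in the style of Tamassia or Kant,'' but those constructions \emph{are} the content of the lemma, so as written the argument is circular: the ingredients you supply yourself (grid embedding, port assignment, constant blow-up) do not add up to a proof, and the missing piece is the entire global routing argument.
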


This kind of embedding is known as \emph{orthogonal drawing} of a graph. Biedl and 
Kant \cite{biedl1998} gave a linear time algorithm that produces an orthogonal 
drawing of a given graph with the property that the number of bends along 
each edge is at most 2 (see Figure \ref{graph_grid}).

\begin{corollary}
A planar graph $G=(V,E)$ with maximum degree 3 and $|E| \geq 2$ can be
embedded in the plane such that its vertices are at $(4i,4j)$
and its edges are drawn as a sequence of
consecutive line segments on the lines $x=4i$ or $y=4j$, for  
integers $i$ and $j$.
\end{corollary}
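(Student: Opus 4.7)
The plan is to derive the corollary directly from Lemma \ref{key-lemma} by a uniform dilation of the orthogonal drawing. First I would invoke the lemma, noting that a graph of maximum degree $3$ trivially satisfies the degree-$4$ hypothesis, so the Biedl--Kant algorithm produces an orthogonal embedding of $G$ in which every vertex sits at an integer point $(i,j)$ and every edge is routed as a chain of horizontal/vertical segments lying on grid lines of the form $x = i$ or $y = j$, with at most two bends per edge.

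Next I would apply the scaling map $\varphi:(x,y)\mapsto (4x,4y)$ to the entire embedding. Under $\varphi$, each vertex image is $(4i,4j)$ as required, and each segment that used to lie on $x=i$ (respectively $y=j$) now lies on the line $x=4i$ (respectively $y=4j$), so the combinatorial structure of the orthogonal drawing is preserved while the coordinates acquire the prescribed form. Because $\varphi$ is an injective affine map, the drawing remains planar and no two non-incident features (vertex points or edge segments) coincide or cross in a way that they did not already.

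The condition $|E|\ge 2$ is only used to rule out the trivial cases (empty graph or a single edge), for which the conclusion is either vacuous or can be realized by placing the two endpoints at, say, $(0,0)$ and $(4,0)$; in all other cases the orthogonal drawing produced by Biedl--Kant is well-defined and the scaling argument above applies verbatim. I do not foresee a real obstacle: the only thing to check is that scaling by an integer factor of at least $1$ keeps the edges on the required integer-multiple grid lines, and the factor $4$ is chosen precisely to leave a buffer of three empty rows/columns between any two consecutive vertex lines, which will be exploited later for inserting the unit-disk gadgets in the hardness reduction.
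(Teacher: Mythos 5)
Your proposal is correct and matches the argument the paper intends (the corollary is stated without proof, as an immediate consequence of Lemma~\ref{key-lemma} and the Biedl--Kant orthogonal drawing): apply the degree-$4$ embedding to the degree-$3$ graph and dilate by the factor $4$, which sends vertices to $(4i,4j)$ and keeps edge segments on the lines $x=4i$, $y=4j$ while preserving planarity. Your added remarks about the $|E|\ge 2$ hypothesis and the purpose of the buffer rows are consistent with how the scaled grid is used later in the reduction.
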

 
\begin{figure}[ht]
\centering
%\vspace*{-0.5cm}
\begin{subfigure}[b]{.5\textwidth}  
  \centering
  \includegraphics[scale=0.75]{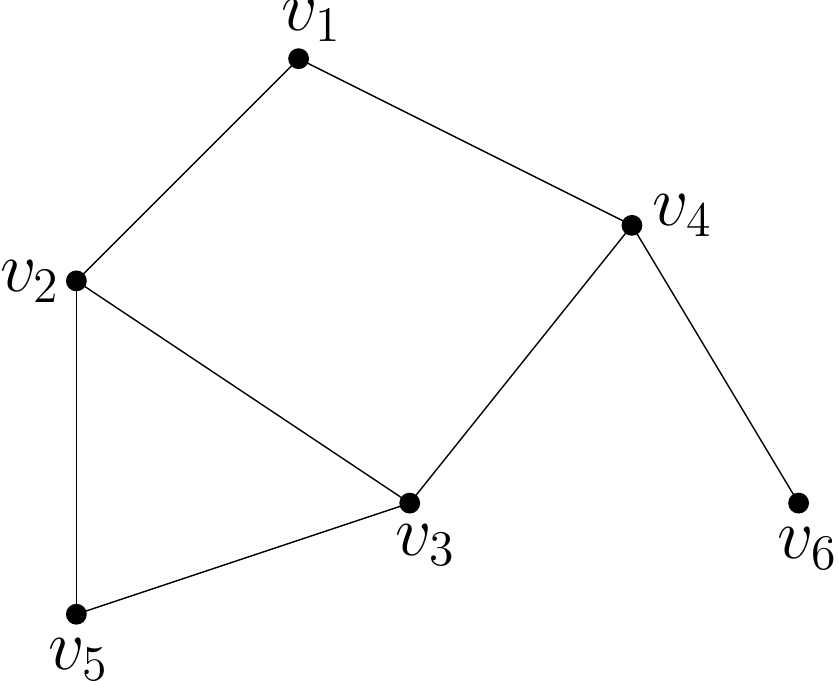}
 \caption{}
\end{subfigure}%
\hspace*{-1cm}
\begin{subfigure}[b]{.5\textwidth}  
  \centering
  \includegraphics[scale=0.75]{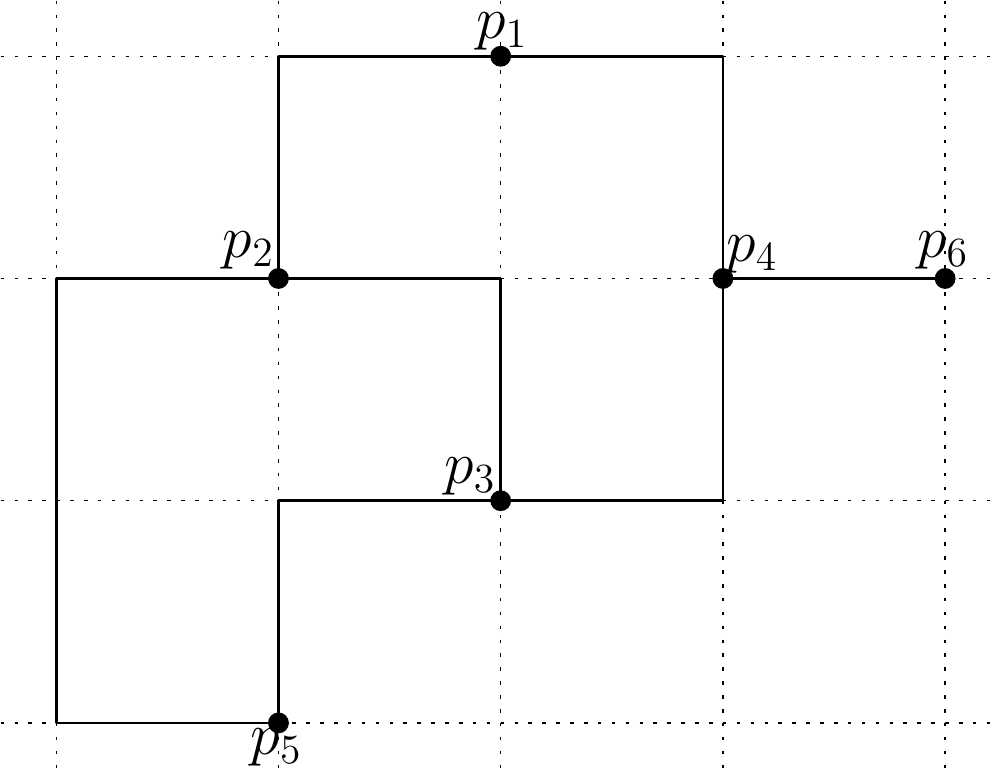}
\caption{}  
\end{subfigure}
%\hspace*{-0.3cm}
\begin{subfigure}[b]{.5\textwidth}  
  \centering
  \includegraphics[scale=0.75]{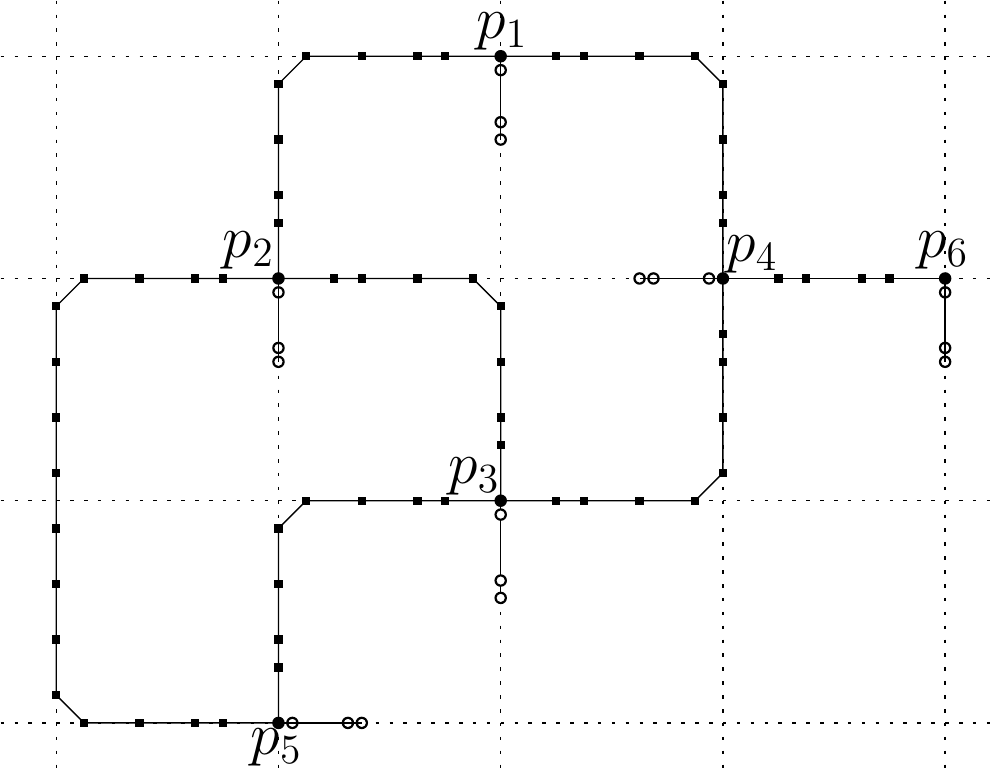}
  \caption{}  
\end{subfigure}

\caption{(a) A planar graph $G$ with maximum degree 3, 
  (b) its embedding on a grid, and (c)  construction of an UDG from the embedding.}\label{graph_grid}
%\vspace{-15pt}
\end{figure}
%\vspace{-.25cm}
\begin{lemma}\label{lemma-udg}
 Let $G=(V,E)$ be an instance of \textsc{Vc-Pla} with $|E| \geq 2$. An instance $G'=(V',E')$ of 
 \textsc{Lds-Udg} can be constructed from $G$ in polynomial-time.
\end{lemma}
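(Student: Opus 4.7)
The plan is a polynomial reduction from \textsc{Vc-Pla}. Given $G=(V,E)$ with $|E|\geq 2$, I would first apply the corollary above to obtain an orthogonal drawing in which the vertices of $G$ sit at grid points $(4i,4j)$ and each edge is drawn as a polyline of axis-parallel segments along the lines $x=4i$ or $y=4j$. By Lemma \ref{key-lemma} this drawing fits in $O(|V|^2)$ area, so the total edge length $L$ is $O(|V|^2)$. Along the drawing I would then replace each grid vertex by a constant-size \emph{vertex-gadget} of unit disks and each edge polyline by an \emph{edge-gadget} consisting of a chain of unit disks placed along the polyline at a spacing slightly below $1$ (so consecutive chain-disks are adjacent in the resulting UDG). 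The target parameter will have the form $k'=\alpha|V|+\beta L+k$ for fixed constants $\alpha,\beta$ determined by the gadgets. Since $L$ and $|V|$ are polynomially bounded and each gadget has constant local size, the UDG $G'$ and the integer $k'$ can be produced in polynomial time.

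Each vertex-gadget at $v\in V$ will admit exactly two locally optimal LDS restrictions, a ``$v$-selected'' and a ``$v$-unselected'' state, differing by exactly one disk, so that declaring $v$ to lie in the cover costs one extra disk in any LDS of $G'$. Each edge-gadget will be a chain whose minimum LDS forces an alternating pattern along the polyline; this alternation has two legal \emph{phases}, and the phase at each end will be pinned by the state of the incident vertex-gadget. The crucial design requirement is that the two endpoint phases be \emph{compatible} precisely when at least one endpoint vertex is in the selected state; otherwise an extra disk is needed inside the chain to repair the mismatch. Accumulating this ``mismatch penalty'' over all edges will then show that an LDS of size $\leq k'$ exists in $G'$ iff the selected vertices form a vertex cover of size $\leq k$ in $G$.

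The main obstacle, and the real difference with existing UDG-hardness reductions for ordinary domination, is to enforce both LDS properties simultaneously: every disk must be $2$-dominated, and every pair of distinct disks must have their union of closed neighborhoods intersect the solution in at least $3$ disks. I would first verify these conditions locally inside each vertex- and edge-gadget by a finite case analysis, then rely on the grid spacing of $4$ to ensure that disks from non-incident gadgets sit at Euclidean distance greater than $1$ and hence do not interact in $G'$. The only pairs that still require joint inspection are those straddling the junction between a vertex-gadget and an incident edge-gadget; for these I would add one or two buffer disks at each junction so that the $3$-pair-domination condition holds there as well. Bookkeeping the exact contributions of each gadget type to $k'$ then delivers the claimed biconditional and completes the polynomial-time construction.
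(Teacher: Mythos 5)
Your overall strategy coincides with the paper's: embed $G$ orthogonally on a grid of spacing $4$, replace vertices by constant-size gadgets and edges by chains of points along the polylines, and aim for a target of the form $k+\alpha|V|+\beta L$ with a ``mismatch penalty'' of one extra point on any edge neither of whose endpoints is selected. The polynomial-time bound you give (polynomial embedding area, constant-size gadgets) is the same counting argument the paper uses.

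The genuine gap is that you never actually construct the gadgets; you only list the properties they must have, and for this lemma the explicit construction \emph{is} the entire content of the proof. The design is delicate precisely because of the liar's conditions. For instance, to pin the vertex gadget the paper attaches to each node point $p_i$ a pendant segment of length $1.4$ carrying three support points at distances $0.2$, $1.2$, and $1.4$, so that the outermost point has a closed neighborhood of size two; this forces all three support points into \emph{every} LDS (the two outer ones by double domination, the third by the pair condition), which is what makes the ``unselected'' state feasible at a fixed cost of $3$ and the ``selected'' state cost exactly one more. Likewise each length-$4$ segment carries exactly four joint points at distances chosen so that only consecutive points are adjacent; the liar's conditions then force at least three of every four into any solution, and the ``phase'' is which single point of each segment is omitted --- a period-$4$, three-out-of-four pattern, not the period-$3$ alternation familiar from dominating-set reductions, so the intuition you import needs reworking before your compatibility claim can be checked. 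Until concrete coordinates with verified adjacencies are written down (including checking that the pendant segment can be drawn without breaking planarity, which uses the degree-$3$ hypothesis, and that chains of distinct edges incident to the same node point do not interact --- the paper needs no ``buffer disks'' because it places the first joint point of each chain at distance $1$ from the node point), the finite case analysis you defer to cannot be performed and the biconditional behind your choice of $k'$ remains unsupported.
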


\begin{proof}
 %\vspace*{-0.2cm}
 We construct $G'$ in four phases. \\ 
 {\bf Phase 1: Embedding of $G$ into a grid of size $4\times 4$}\\
 Embed the instance $G$ in the plane as discussed previously 
 using one of the algorithms in \cite{hopcroft,itai}. An edge 
 in the embedding is a sequence of connected line segment(s) 
 of length four units each. If the total number of line segments
 used in the embedding is $\ell$, then the sum of the lengths of the line
 segments is $4\ell$ as each line segment has length 4 units. 
 We name the points in the embedding correspond to the vertices of $G$ 
 by \emph{node points} (see Figure \ref{graph_grid}(b)).
 
 \noindent
 {\bf Phase 2: Adding extra points to the embedding}\\
 Divide the set of line segments in the embedding into two
 categories, namely, proper and improper. 
 We call a line segment \emph{proper} if none of its end
 points correspond to a vertex in $G$. A line segment is \emph{improper} if 
 it is not a proper segment. For each edge $(p_i,p_j)$ of
 length 4 units we add two points at distances 1 and 1.5 units of $p_i$ and $p_j$, 
 respectively (thus adding four points in total, see the edge $(p_4,p_6)$ in 
 Figure \ref{graph_grid}(c)).
 For each edge of length greater than 4 units,
 we also add points as follows: for each improper line segment we add four points
 at distances 1, 1.5, 2.5, and 3.5 units from the endpoint corresponding to a
 vertex in $G$, and for each proper line segment we add four points at distances
 0.5 and 1.5 units from its endpoints (see Figure \ref{graph_grid}(c)). We name the points 
 added in this phase \emph{joint points}.\\
 
 \noindent
 {\bf Phase 3: Adding extra line segments and points}\\
 Add a line segment of length 1.4 units (on the lines $x=4i$ or $y=4j$
 for some integers $i$ or $j$) for every point $p_i$, which corresponds to a vertex $v_i$ in
 $G$, without coinciding with the line segments that had already been drawn.
 Observe that adding this line segment on the lines $x=4i$ or $y=4j$
 is possible without losing the
 planarity as the maximum degree of $G$ is 3. Now, add three points 
 (say $x_i$, $y_i$, and $z_i$) on these line segments at distances 0.2, 1.2, and 1.4 units, 
 respectively, from $p_i$.
 We name the points  added in this phase \emph{support points}.\\
 {\bf Phase 4: Construction of UDG}\\
 For convenience, let us denote the set of node points, joint points, and support 
 points by $N$, $J$, and $S$, respectively. Let $N = \{p_i \mid v_i \in V\}$,
 $J=\{q_1,q_2,\ldots,q_{4\ell}\}$, and $S=\{x_i,y_i,z_i \mid v_i \in V\}$. 
 We construct a UDG $G'=(V',E')$,
 where $V' = N \cup J \cup S$ and there is an edge between two points in $V'$
 if and only if the Euclidean distance between the points is at most 1
 (see Figure \ref{graph_grid}(c)).  
 Observe that, $|N| = |V| (= n)$, $|J| = 4\ell$, where $\ell$ is the
 total number of line segments in the embedding, and $|S| = 3|V|(=3n)$. Hence, $|V'| = 4(n+\ell)$
 and $\ell$ is bounded by a polynomial of $n$. Therefore $G'$ can be constructed in
 polynomial-time.
 \end{proof}
%\vspace*{-0.36cm}
\begin{theorem}\label{thm:main}
 \textsc{Lds-Udg} is NP-complete.
\end{theorem}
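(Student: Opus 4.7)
The plan is to establish NP-completeness of \textsc{Lds-Udg} by proving membership in NP together with a polynomial-time reduction from \textsc{Vc-Pla} using the construction $G\mapsto G'$ produced in Lemma~\ref{lemma-udg}. Membership in NP is immediate: given a candidate $D\subseteq V'$, one directly verifies in polynomial time that $|N_{G'}[v]\cap D|\geq 2$ for every $v\in V'$ and $|(N_{G'}[u]\cup N_{G'}[v])\cap D|\geq 3$ for every pair of distinct vertices $u,v\in V'$.

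For NP-hardness, I aim to show that there is a constant $c=c(n,\ell)$, depending only on the embedding parameters, such that $G$ admits a vertex cover of size at most $k$ if and only if $G'$ admits a liar's dominating set of size at most $k+c$. The first step is to analyse the support-point gadget at each node. For each $v_i\in V$ the points $p_i,x_i,y_i,z_i$ lie on a common axis-parallel segment at distances $0,0.2,1.2,1.4$ from $p_i$, so their induced subgraph is the path $p_i$--$x_i$--$y_i$--$z_i$. Since $N_{G'}[z_i]=\{y_i,z_i\}$, property~(i) forces $\{y_i,z_i\}\subseteq D$, and property~(ii) applied to the pair $(y_i,z_i)$ then forces $x_i\in D$. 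Consequently every liar's dominating set of $G'$ contains all $3n$ support points, so these contribute a fixed additive term.

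The second step is to analyse the joint-point chain along each embedded edge. By construction, the joint points on an edge together with the node-point endpoints $p_i,p_j$ induce a path in $G'$ whose inter-vertex distances are tuned so that the local LDS constraints force a fixed number of joint points into $D$ together with exactly one additional vertex from the chain. The extra vertex may be a joint point or one of $\{p_i,p_j\}$; crucially, selecting it to be $p_i$ simultaneously resolves the deficit on every edge of $G$ incident to $v_i$, exactly mirroring what a vertex-cover vertex provides. I would formalise this as an identity $\mathrm{opt}_{\mathrm{LDS}}(G')=3n+\beta\ell+\mathrm{opt}_{\mathrm{VC}}(G)$ for a suitable constant $\beta$, from which the equivalence of the two decision problems follows and the reduction is complete.

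The main obstacle will be the joint-point case analysis. The construction in Phase~2 produces different joint-point patterns depending on whether a line segment is proper or improper, and on whether an embedded edge consists of one, two, or three segments; for each pattern one must work out the exact minimum joint-point cost under each of the four possibilities ``neither'', ``only $p_i$'', ``only $p_j$'', ``both'' lie in $D$, and verify that the saving obtained by including an endpoint aggregates correctly across all edges sharing that node. A further delicate point is that $p_i$ can belong to the closed neighbourhood of the first joint point on up to three incident edges at once (since $G$ has maximum degree $3$), so the forcing implied by properties~(i)--(ii) at $p_i$ and at its neighbouring joint points must be shown to be compatible with, rather than inflate, the counts used in the bijection with vertex covers.
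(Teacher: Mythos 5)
Your high-level plan matches the paper's: membership in NP, reduction from \textsc{Vc-Pla} via the Lemma~\ref{lemma-udg} construction, and the forcing argument that every LDS of $G'$ must contain all $3n$ support points (your derivation of $S\subseteq D$ via $N_{G'}[z_i]=\{y_i,z_i\}$ and then property~(ii) on the pair $(y_i,z_i)$ is correct, and in fact slightly cleaner than the paper's phrasing). However, the proposal stops exactly where the real work begins, and this is a genuine gap rather than a stylistic difference. You never carry out the necessity direction: given a vertex cover $C$ you must exhibit a concrete LDS of size $k+3\ell+3n$ --- the paper does this by picking, for each embedded edge, three of the four joint points per segment chosen by traversing away from a cover endpoint, and then verifying property~(ii) for \emph{all} pairs of vertices of $G'$ (node--node, joint--joint, node--joint, and the mixed cases with support points). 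None of that verification appears in your write-up, and it is not automatic; e.g.\ a node point $p_i\notin N'$ is covered only because the traversal rule guarantees the joint point adjacent to $p_i$ on some incident edge is selected.

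On the sufficiency side, you correctly identify what must be shown --- that each segment forces at least three joint points into $D$, and that an edge with neither endpoint in $D$ forces one extra vertex somewhere along its chain --- but you explicitly defer it (``I would formalise this as an identity\dots'', ``the main obstacle will be the joint-point case analysis''). These are precisely Claims~(ii) and~(iii) of the paper's proof, and Claim~(iii) in particular requires a nontrivial parity argument over the $\ell'$ segments of an edge (the cases $\ell'$ even and odd are handled separately) followed by an exchange argument that swaps the surplus joint point for an endpoint $p_i$ to recover a vertex cover of size at most $k$. Your proposed target, an exact identity $\mathrm{opt}_{\mathrm{LDS}}(G')=3n+\beta\ell+\mathrm{opt}_{\mathrm{VC}}(G)$, is also stronger than needed and harder to establish than the paper's two one-sided bounds; the ``exactly one additional vertex per chain'' picture you describe is not literally what holds (edges whose chain already contains a selected endpoint need no extra vertex, and the extra vertex for an uncovered edge can sit on any of its segments). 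As written, the proposal is a correct outline of the same reduction but not a proof: the combinatorial core of both directions is missing.
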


\begin{proof}
 \textsc{Lds-Udg}$\in NP$, since for any given set $D\subseteq V$ and a positive integer $k$, we can verify  whether $D$ is a liar's dominating set of size at most $k$ or not in polynomial-time.
  
 We prove the hardness of \textsc{Lds-Udg} by reducing \textsc{Vc-Pla} to it. Let 
 $G=(V,E)$ be an instance of \textsc{Vc-Pla}. Construct an instance $G'=(V',E')$ 
 of \textsc{Lds-Udg} as discussed in Lemma \ref{lemma-udg}.
 We now prove the following claim: {\it $G$ has a vertex cover of size at most $k$
 if and only if $G'$ has a liar's dominating set of size at most $k + 3\ell + 3n$}.

 \noindent {\bf Necessity:} Let $C\subseteq V$ be a vertex cover 
 of $G$ such that $|C| \leq k$. Let $N' = \{p_i \in N \mid v_i \in C\}$, i.e., 
 $N'$ is the set of vertices (or node points) in $G'$ that correspond to the vertices in $C$.  
 From each segment in the embedding we choose 3 vertices (joint points). 
 The set of chosen vertices, say $J'(\subseteq J)$, 
 together with $N'$ and $S$ will form an LDS of desired cardinality in $G'$.
 We now discuss the process of obtaining the set $J'$. Initially $J' = \emptyset$.
 As $C$ is a vertex cover, every edge in $G$ has at least one of its end vertices in $C$. 
 Let $(v_i,v_j)$ be an edge in $G$ and $v_i \in C$ (the tie can be broken arbitrarily 
 if both $v_i$ and $v_j$ are in $C$). Note that the edge $(v_i,v_j)$ is 
 represented as a sequence of line segments in the embedding. 
 Start traversing the segments (of $(v_i,v_j)$) from $p_i$, where $p_i$ corresponds to $v_i$,
 and add all the vertices to $J'$ except the first one from each 
 segment encountered in the traversal (see $(p_2,p_5)$ in Figure \ref{fig:proof} (b). The 
 bold vertices are part of $J'$ while traversing from $p_2$). 
% \vspace{-0.5cm}

\begin{figure}[!ht]
\centering
\begin{subfigure}[b]{.5\textwidth}  
  \centering
  \includegraphics[width=0.65\linewidth]{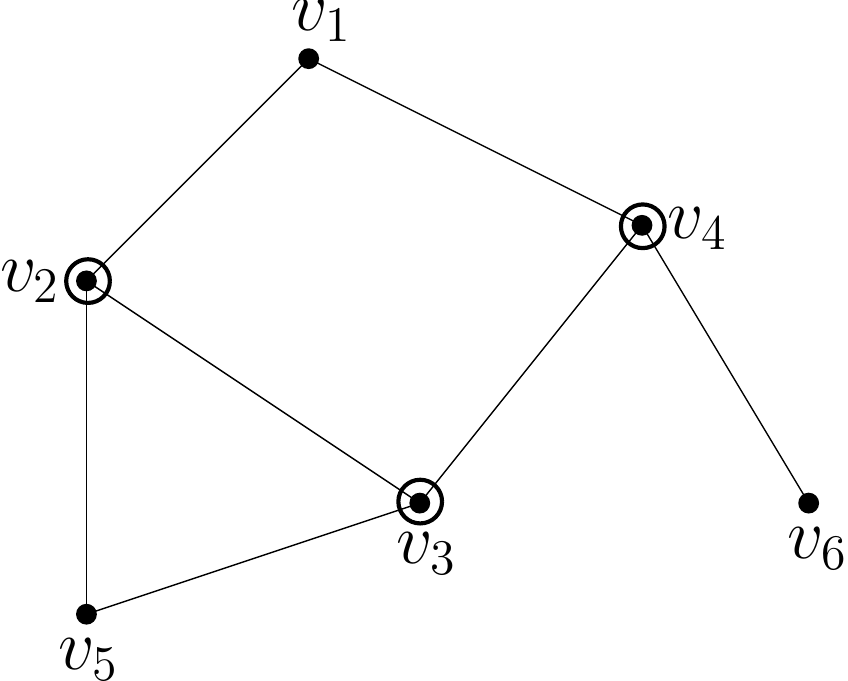}
  \caption{}  
\end{subfigure}%
\hfill
\begin{subfigure}[b]{.5\textwidth}  
  \centering
  \includegraphics[width=0.75\linewidth]{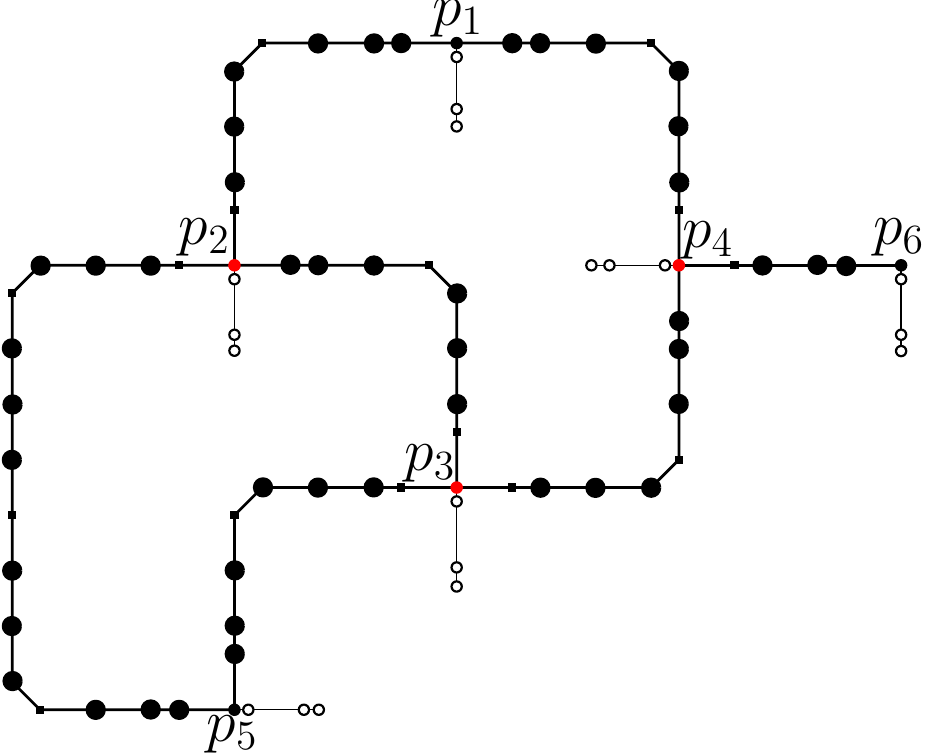}
  \caption{}  \label{fig:proofb}
\end{subfigure}
\caption{(a) A vertex cover $\{v_2,v_3,v_4\}$ in $G$, and (b) the construction 
  of $J'$ in $G'$ (the tie between $v_2$ and $v_3$, and $v_3$ and $v_4$ is broken 
  by choosing $v_3$)}\label{fig:proof}%\vspace{-0.5cm}
\end{figure}
\newpage
Apply the  above process to each edge in $G$.
 Observe that the cardinality of $J'$ is $3\ell$ as we have chosen 3 vertices from each 
 segment in the embedding. Let $D = N' \cup J' \cup S$.  Now, we argue 
 that $D$ is a liar's dominating set in $G'$.
 \begin{enumerate}
 \item Each $p_i \in N$ is dominated by $x_i$ in $S$.
 If $p_i \in N'$ (i.e., the corresponding vertex $v_i \in C$ in $G$), 
 then $|N_{G'}[p_i] \cap D| \geq |\{p_i,x_i\}| = 2$. If $p_i \notin N'$,
 then there must exist at least one vertex $q_j$ in $J'$ dominating $p_i$. 
 The existence of $q_j$ is guaranteed by the way we constructed $J'$.
 Hence, $|N_{G'}[p_i] \cap D| \geq |\{q_j,x_i\}| = 2$. 
 In either case every vertex in $N$ is dominated by at least two vertices in $D$.
 It is needless to say that vertex in $J$ is dominated by at least two 
 vertices in $N' \cup J'$. Similarly, every vertex in $S$ is dominated 
 by itself, by its neighbor(s) in $S$, and, perhaps, by one vertex in $N'$. Therefore, 
 every vertex in $V'$ is double dominated by vertices in $D$.
 
 \item Consider a pair of distinct vertices in $V'$. Of course, 
 every pair of distinct vertices in $S$ satisfy the liar's second 
 condition. We prove that remaining pairs of distinct vertices 
 also do satisfy the liar's second condition by considering all possible 
 cases.\\
 %\begin{description}
  {\bf Case a. $p_i, p_j \in N$:} If at least one of $p_i,p_j$ belongs to 
  $N'$ (without loss of generality say $p_i \in N'$), then 
  $|(N_{G'}[p_i]\cup N_{G'}[p_j])\cap D|\geq |\{x_i,x_j,p_i\}|=3$.
  If none of $p_i,p_j$ belongs to $N'$, then there must exists some 
  $q_i,q_j \in J'$ such that $q_i,q_j$ dominate $p_i,p_j$, respectively.
  Hence, $|(N_{G'}[p_i]\cup N_{G'}[p_j])\cap D|\geq |\{x_i,x_j,q_i,q_j\}|=4$.\\  
  {\bf Case b. $q_i,q_j \in J$:} If both $q_i,q_j \in J'$, then it is 
  trivial that $|(N_{G'}[q_i]\cup N_{G'}[q_j])\cap D|\geq 3$. Suppose one of 
  $q_i,q_j$ belongs to $J'$ (without loss of generality let us assume $q_i \in J'$).
  As every vertex in $G'$ is double dominated, $q_j$ must be dominated 
  by two vertices in $J'$ or by either some $q_k$ in $J'$ and some $p_l$ in $N'$.
  In either case we get $|(N_{G'}[q_i]\cup N_{G'}[q_j])\cap D|\geq 3$. A 
  similar argument works even if none of $q_i,q_j$ belong to $J'$.\\  
  {\bf Case c.} $p_i \in N$ and $q_j \in J$: If none of $p_i$ and $q_j$ 
  belong to $D$, then the argument is trivial as each one is
  dominated by at least two vertices in $D$. If both belong to $D$, then 
  $|(N_{G'}[p_i]\cup N_{G'}[q_j])\cap D|\geq |\{p_i,x_i,q_j\}|=3$.
  If $p_i \in D$ and $q_j \notin D$ (the other case is similar), 
  then $|(N_{G'}[p_i]\cup N_{G'}[q_j])\cap D|\geq 3$
  holds as $q_j$ is double dominated.
 %\end{description}

 Likewise, we can argue for other pair combinations too. Therefore, 
 every pair of distinct vertices in $V'$ is dominated by at least 3 vertices in $D$.
 
\end{enumerate}

 Therefore $D$ is an LDS in $G'$ and 
 $|D| = |N'| + |J'| + |S|\leq k + 3\ell + 3n$.
 
 \noindent {\bf Sufficiency:} Let $D \subseteq V'$ be an LDS of size at 
 most $k + 3\ell + 3n$. We prove that $G$ has a
 vertex cover of size at most $k$ with the aid of the following claims.
 \begin{enumerate}[label=(\roman*)]
  \item $S \subset D$.  
  \item Every segment in the embedding must contribute at least 3 vertices  
  to $D$ and hence $|J \cap D| \geq 3\ell$, where $\ell$ is the total number of 
  segments in the embedding.  
  \item If $p_i$ and $p_j$ correspond to end vertices of an edge $(v_i,v_j)$ in $G$,
  and if both $p_i,p_j$ are not in $D$, then there must be at least $3\ell' + 1$ vertices in 
  $D$ form the segment(s) representing the edge $(v_i,v_j)$, where $\ell'$ is the 
  number of segments representing the edge $(v_i,v_j)$ in the embedding.
 \end{enumerate}
 
 Claim (i) directly follows from the definition of liar's dominating set.
 Observe that we added points $x_i,y_i, z_i$ such that $p_i$
 is adjacent to $x_i$, $x_i$ is adjacent to $y_i$, and $y_i$ is adjacent
 to $z_i$ in $G'$, i.e., $\{(p_i,x_i),(x_i,y_i),(y_i,z_i)\}\subset E'$, for each $i$. Hence,
 $z_i$ and $y_i$ must be in $D$ due to the first condition of the liar's domination.
 Also, every connected component of $D$ in $G$ must contain at least three vertices due to the second
 condition of liar's domination. Hence, $x_i \in D$. Therefore, any liar's dominating
 set of $G'$ must contain $\{x_i,y_i,z_i\}, 1\leq i \leq n$, i.e., $S \subset D$.
 
 Claim (ii) follows from the fact that only consecutive points are 
 adjacent (in $G'$) on any segment in the embedding. Let $\eta$ be a segment in the embedding 
 having vertices $q_i,q_{i+1},q_{i+2}$, and $q_{i+3}$.
 On contrary, assume that $\eta$ has only two of its vertices in $D$. 
 Note that both $q_{i+1}$ and $q_{i+2}$ can not be in $D$ simultaneously.
 If both are present in $D$, then they do not satisfy the second condition 
 as $q_i$ and $q_{i+3}$ are not in $D$, i.e., 
 $|(N_{G'}[q_{i+1}]\cup N_{G'}[q_{i+2}]) \cap D| = |\{q_{i+1},q_{i+2}\}| =2$; contradiction to $D$ is an LDS.
 Without loss of generality we assume that $q_{i+2} \notin D$ 
 (the similar argument works even if $q_{i+1} \notin D$). 
 If $q_i$ and $q_{i+1}$ are in $D$, then $|(N_{G'}[q_{i+1}]\cup N_{G'}[q_{i+2}]) \cap D| = |\{q_i,q_{i+1}\}| =2$. If $q_i$ and $q_{i+3}$ are in $D$, then 
 $|(N_{G'}[q_{i+1}]\cup N_{G'}[q_{i+2}]) \cap D| = |\{q_i,q_{i+3}\}| =2$.
 If $q_{i+1}$ and $q_{i+3}$ are in $D$, then 
 $|(N_{G'}[q_{i+1}]\cup N_{G'}[q_{i+2}]) \cap D| = |\{q_{i+1},q_{i+3}\}| =2$.
 In either case we arrived at a contradiction.
 
 Claim (iii) follows from Claim (ii). Let $(v_i,v_j)$ be an edge in $G$ such that 
 $p_i$ and $p_j$ are not in $D$. By Claim (ii) every segment must contribute 
 at least three vertices to $D$. Hence, the number of vertices in $D$ from the segments 
 representing the edge $(v_i,v_j)$ is at least $3\ell'$. We argue that if both $p_i$ 
 and $p_j$ are not in $D$, then the number of vertices in $D$ from the segments 
 representing the edge $(v_i,v_j)$ is at least $3\ell' + 1$. Suppose that there are 
 exactly $3\ell'$ vertices in $D$ from the segments. That is, no segment representing 
 the edge $(v_i,v_j)$ contains more than three vertices in $D$. Let $p_i,q_1,q_2,\ldots,q_{4\ell'},p_j$ 
 be the vertices encountered while traversing the segments from $p_i$. 
 If $\ell'=1$, the argument can be proven as in the proof of Claim (ii).
 Assume $\ell'>1$.
 
 \begin{figure}[ht]
  \centering
  \includegraphics[scale=0.85]{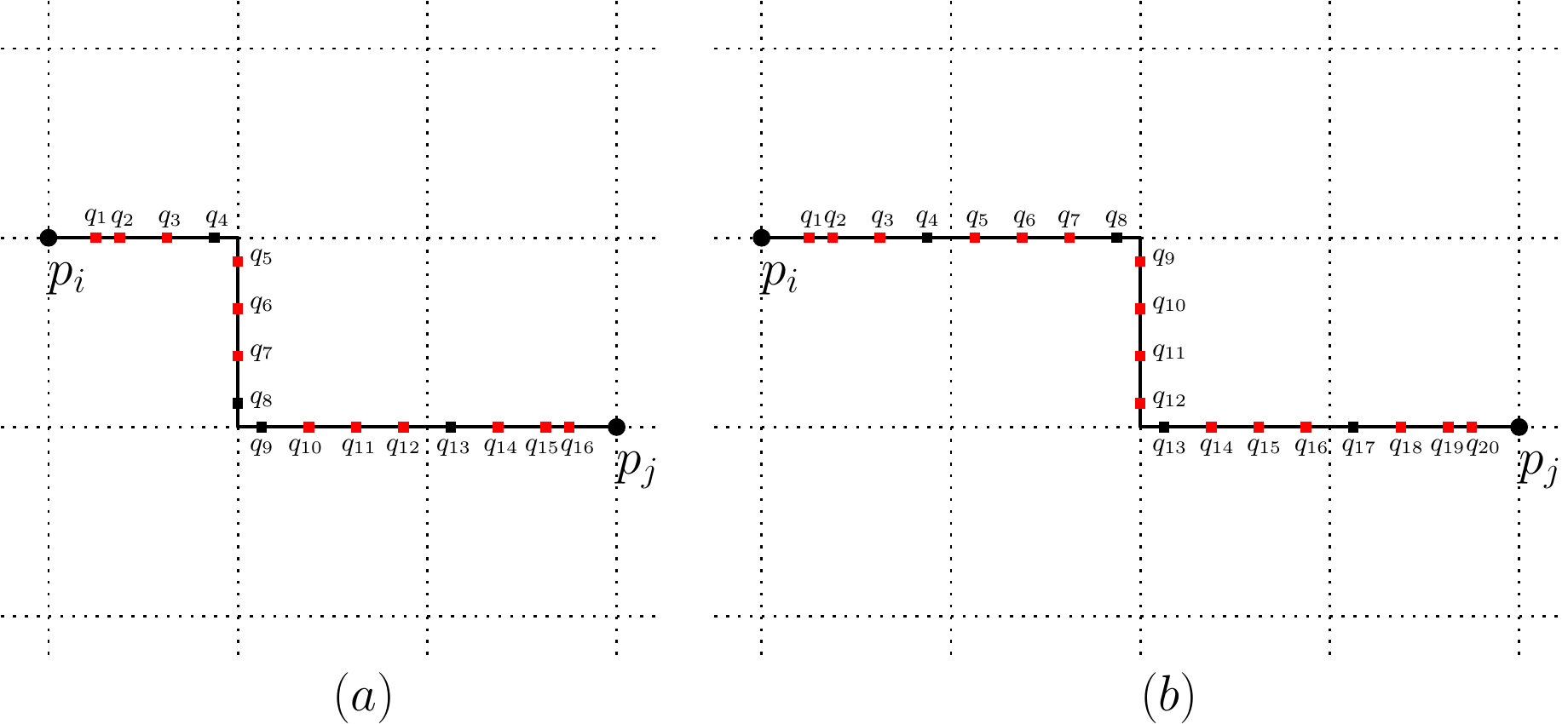}
  \caption{Illustration of Claim (iii). The vertices marked red must be in $D$.}\label{fig:claim3}
 \end{figure}
\begin{description}
 \item [Case a. $\ell'$ is even:] Since $p_i$ and $p_j$ are not in $D$ and 
 due to the second condition of the liar's domination, the vertices $q_1,q_2,q_3$ 
 from the first segment and $q_{4\ell'-2},q_{4\ell'-1},q_{4\ell'}$ from the 
 last segment must be in $D$. The vertices $q_4$ and $q_{4\ell'-3}$ can not 
 be in $D$ as we assumed that each segment contains exactly three vertices in $D$.
 If we continue in the same manner for the rest of the segments from both sides,
 we end up in not choosing the vertices $q_{2\ell'}$ and $q_{2\ell'+1}$ from the  
 $\frac{\ell'}{2}$-th and $(\frac{\ell'}{2}+1)$-th segments, respectively. 
 Note that $q_{2\ell'}$ is the last vertex on $\frac{\ell'}{2}$-th segment and 
 $q_{2\ell'+1}$ is the first vertex on $(\frac{\ell'}{2}+1)$-th segment and 
 $(q_{2\ell'},q_{2\ell'+1})$ is an edge in $G'$ (see Figure \ref{fig:claim3}(a)). Also, note that 
 $|(N_{G'}[q_{2\ell'}]\cup N_{G'}[q_{2\ell'+1}])\cap D|=|\{q_{2\ell'-1},q_{2\ell'+2}\}|=2$.
 Implies, the vertices $q_{2\ell'}$ and $q_{2\ell'+1}$ are not satisfying the second 
 condition, which is a contradiction to our assumption that $D$ is an LDS of $G'$.
 
 \item [Case b. $\ell'$ is odd:]
 If we proceed as in Case a, we can observe that $D$ must contain \emph{all} the four 
 vertices on $(\frac{\ell'}{2}+1)$-th segment, i.e., the middle segment, (see Figure \ref{fig:claim3}(b)). Which is a 
 contradiction to our assumption that no segment, representing the edge $(v_i,v_j)$, contains more than three vertices in $D$.
\end{description}

 We now shall show that, by removing and/or 
 replacing some vertices in $D$, a set of $k$ vertices from $N$ can be chosen 
 such that the corresponding vertices in $G$ is 
 a vertex cover. 
 The vertices in $S$ account for $3n$ vertices in $D$ (due to Claim (i)).
 Let $D = D \setminus S$ and $C = \{v_i \in V \mid p_i \in D \cap N\}$. 
 If any edge $(v_i,v_j)$ in $G$ has none of its end 
 vertices in $C$, then we do the following: consider the sequence of 
 segments representing the edge $(v_i,v_j)$ in the embedding. Since,
 both $p_i$ and $p_j$  are not in $D$, there must exist a segment having all its 
 vertices in $D$ (due to Claim (iii)). Consider the segment having its four 
 vertices in $D$. Delete any one of the vertices on the segment and introduce $p_i$ (or $p_j$).
 Update $C$ and repeat the process till every edge has at least one of its end vertices in $C$. 
 Due to Claim (ii), $C$ is a vertex cover in $G$ with $|C|\leq k$. 
 Therefore, \textsc{Lds-Udg} is NP-complete.
\end{proof}
%\vspace*{-0.5cm}
\section{Approximation Algorithm}\label{appxalgo}
Banerjee and Bhore \cite{bhore} in their recent paper proposed an approximation algorithm and claim that their algorithm achieves a 5.5-factor approximation ratio for the MLDS problem in UDGs. However, their approximation analysis is erroneous. We first provide a counterexample defying their claim and then propose a simple 7.31-factor approximation algorithm for the said problem.
% Herein, we have also explained the 

 For completeness here we give the idea of the algorithm proposed in \cite{bhore} briefly. As a first step, the point set $P$ (i.e., the set of disk centers) is sorted according to their $x$-coordinates. Now consider the left most point, say $p_i$, and consider $p_i$ in the solution. Next, compute the set of points of $P$ that are inside the circle centered at $p_i$ and of radius $\frac{1}{2}$, 1, and $\frac{3}{2}$. Let these sets be $Cov_{\frac{1}{2}}(C(p_i))$, $Cov_1(C(p_i))$, and $Cov_{\frac{3}{2}}(C(p_i))$, respectively. The points which lie outside the set $Cov_{\frac{3}{2}}(C(p_i))$, their corresponding disks of radius 1 do not contain any point from $Cov_{\frac{1}{2}}(C(p_i))$. So, it suffice to consider the points inside $Cov_{\frac{3}{2}}(C(p_i))$ to ensure liar's domination for $Cov_{\frac{1}{2}}(C(p_i))$. Since $p_i$ is the left most point in $P$, the set $Q = Cov_{\frac{3}{2}}(C(p_i))\setminus Cov_1(C(p_i))$ can contain at most five mutually independent points (i.e., the mutual distance between those five points is greater than one. In other words, the unit radius disks centered at those points do not contain the centers of other disks). In the next step (call it Case 1), for each point $q_i\in Q$, the algorithm chooses at most two points from the set $S(q_i)= Cov_{\frac{1}{2}}(C(p_i)) \cap Cov_1(C(q_i))$ in the solution, if available, where $Cov_1(C(q_i))$ is the set of points lying in the unit disk centered at $q_i$. After selection of these points, $Q$ is updated to $Q \setminus Cov_1(C(q_i))$  and proceed to next point in $Q$. Thus, the algorithm picks at most $5\times 2 + 1 = 11$ points in this iteration. Let $S = \bigcup\limits_{q_i \in Q}S(q_i)$. However, $S$ could be an empty set due to either $Q=\emptyset$ or $S(q_i) = \emptyset$ for each $q_i \in Q$ (call it Case 2). If $S=\emptyset$ or $|S| < 2$, then the algorithm chooses at most 4 points (including $p_i$) from $Cov_1(C(p_i))$ depending on the cardinality of $Cov_\frac{1}{2}(C(p_i))$. Thus, in this case the algorithm picks fewer than 11 points from $Cov_1(C(p_i))$. The points chosen so far ensures the liar's domination for the points in $Cov_\frac{1}{2}(C(p_i))$. Now, $P$ is updated to $P \setminus Cov_\frac{1}{2}(C(p_i))$, and the process is repeated (with the next leftmost point, say $p_j$) until $P$ is empty.

For each point $p_i\in P$, any optimal solution should contain at least two points from $Cov_1(C(p_i))$ due to the first condition of liar's domination, and the algorithm chooses at most 11 points.  Thus, the authors claim that the proposed algorithm is a $\frac{11}{2}$-factor approximation by the charging argument 11 points in the solution returned by algorithm can be charged to two points in the optimal solution. But, the same two points in the optimal solution could be charged multiple times.

Suppose $p_i$ and $p_j$ are the left most points considered in two successive iterations, respectively. There may be a case that the algorithm could end up by choosing a set of 11 points in the solution to dominate $Cov_{\frac{1}{2}}(C(p_j))$ for which the same optimal solution for $Cov_{\frac{1}{2}}(C(p_i))$ is enough to ensure liar's domination for $Cov_{\frac{1}{2}}(C(p_j))$. We elaborate our claim in detail with an example.

Consider the set of points in Figure \ref{fig:counter}(a) as an instance to the algorithm. The points are sorted according to their $x$-coordinates. Let the leftmost point be $p_i$ (see Figure \ref{fig:counter}(b)). The points $\{q_1,q_2,q_3,q_4,q_5\}\in Q$ and are five mutually independent points chosen by the algorithm such that $Cov_{\frac{1}{2}}(C(p_i))\cap Cov_1(C(q_j))=2$, for $j=1,2,\ldots,5$. Along with $p_i$, the total number of points chosen in this iteration is 11. Update $P=P\setminus Cov_{\frac{1}{2}}(C(p_i))$. In the next iteration, $p_j$ is the leftmost point (see Figure \ref{fig:counter}(c)) and $\{q_6,q_7,q_8,q_9,q_{10}\}\in Q$ are five mutually independent points chosen by the algorithm so that $Cov_{\frac{1}{2}}(C(p_j))\cap Cov_1(C(q_j))=2$, for $i=6,7,\ldots, 10$. The algorithm chooses 11 points (including $p_j$) in the solution. Observe that in both the iterations the algorithm doesn't enter Case 2 and, hence, chooses 22 points. In fact, any two (resp. three) red points (see  Figure \ref{fig:counter} (c)) are sufficient to ensure the liar's domination first (resp. second) condition for the point sets $Cov_{\frac{1}{2}}(C(p_i))$ and $Cov_{\frac{1}{2}}(C(p_j))$. After a few iterations $p_k$ will be chosen as the next left most point and the algorithm chooses 11 points (by Case 1) in the solution (see Figure \ref{fig:counter}(d)). However, the same three red points ensures liar's domination for $Cov_\frac{1}{2}(C(p_i))$, $Cov_\frac{1}{2}(C(p_j))$, and $Cov_\frac{1}{2}(C(p_k))$. So the approximation factor of the algorithm proposed in \cite{bhore} is at least 11.
 
\begin{figure}[!ht]
%\centering
\begin{subfigure}[b]{.5\textwidth}  
  \centering
  \includegraphics[scale=.82]{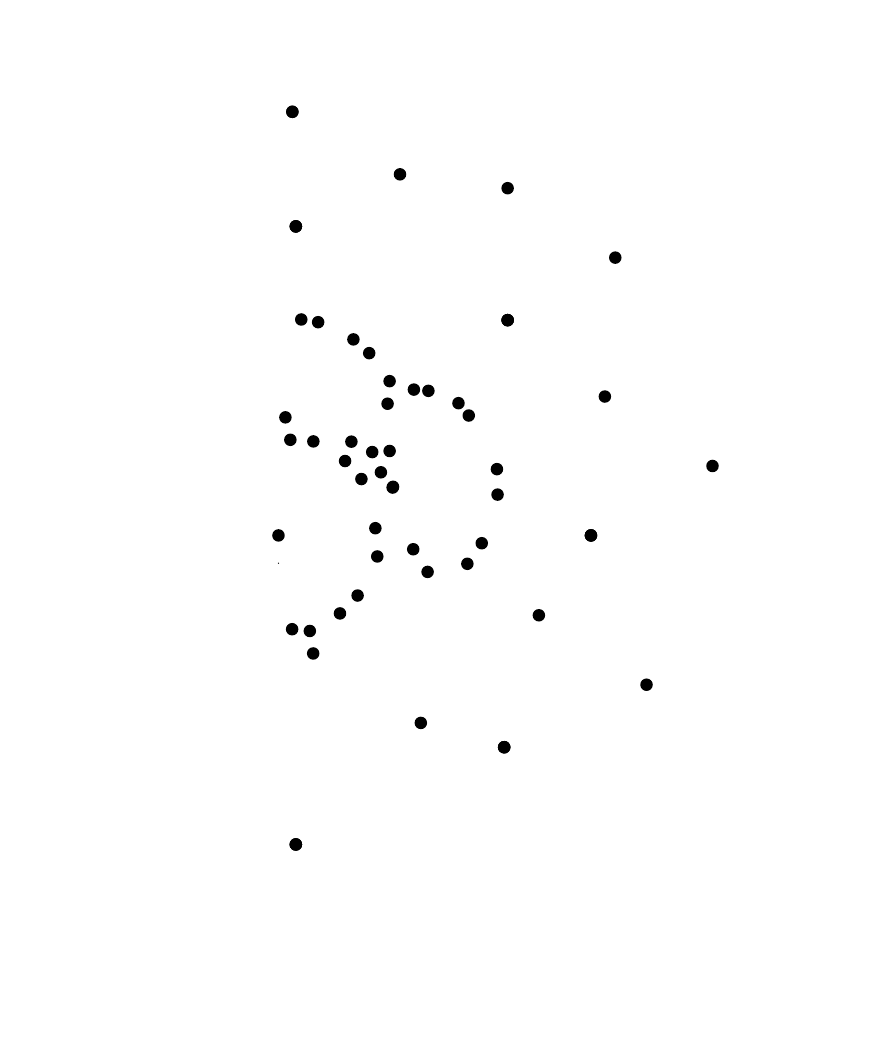}
  \caption{}  
\end{subfigure}%
\hfill
\begin{subfigure}[b]{.5\textwidth}  
  \centering
  \includegraphics[scale=.82]{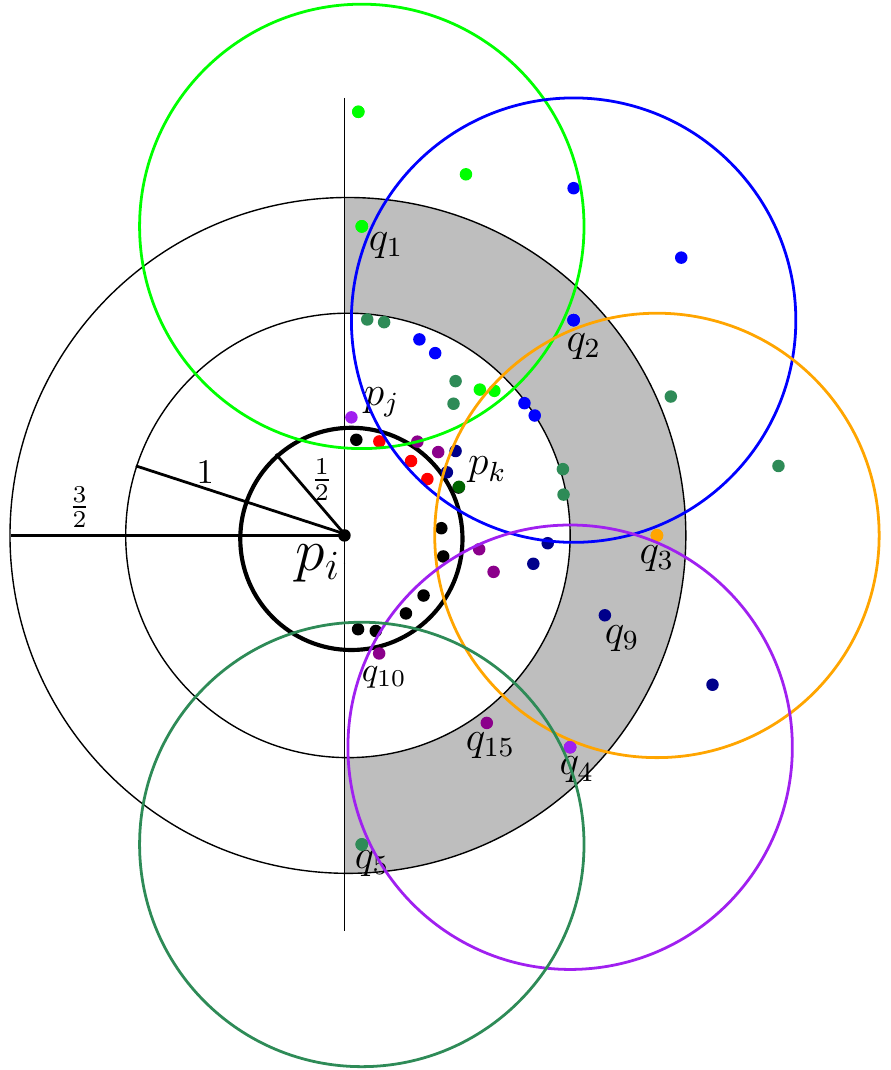}
  \caption{}  
\end{subfigure}%
\hfill
\begin{subfigure}[b]{.5\textwidth}  
  \centering
  \includegraphics[scale=.82]{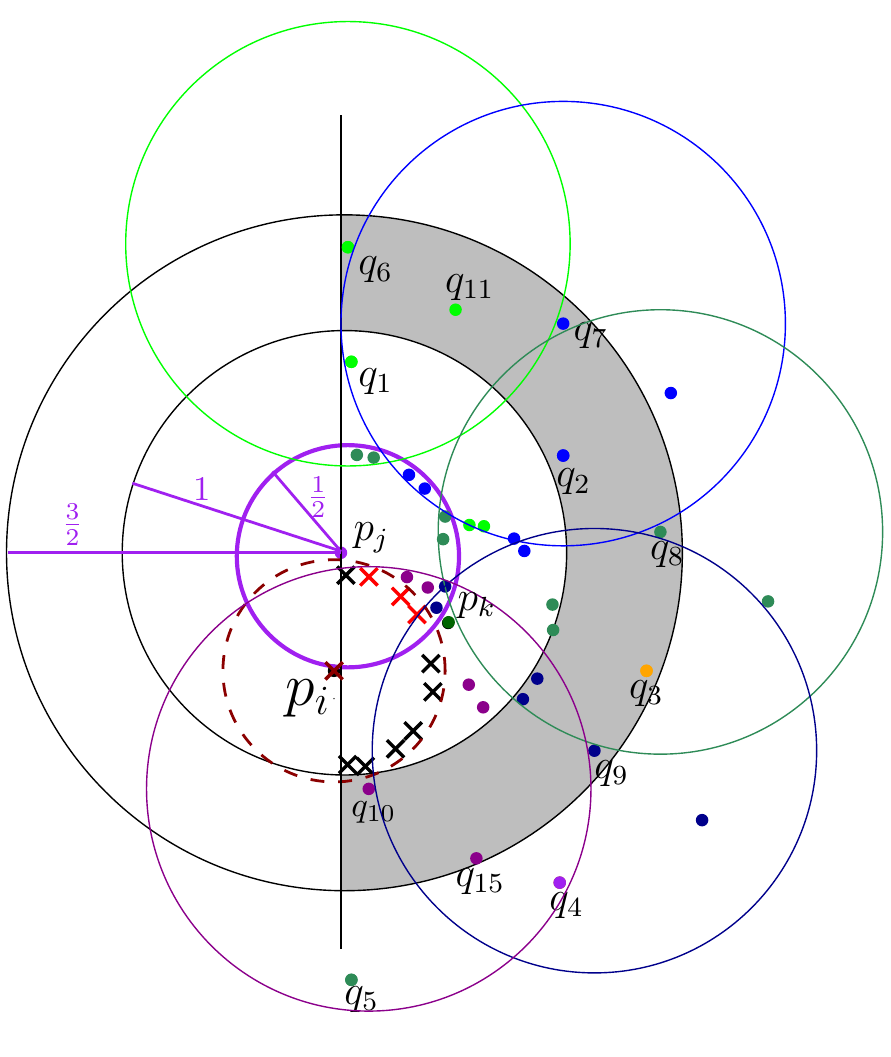}
  \caption{}  %\label{fig:proofb}
\end{subfigure}
\hfill
\begin{subfigure}[b]{.5\textwidth}  
  \centering
  \includegraphics[scale=.82]{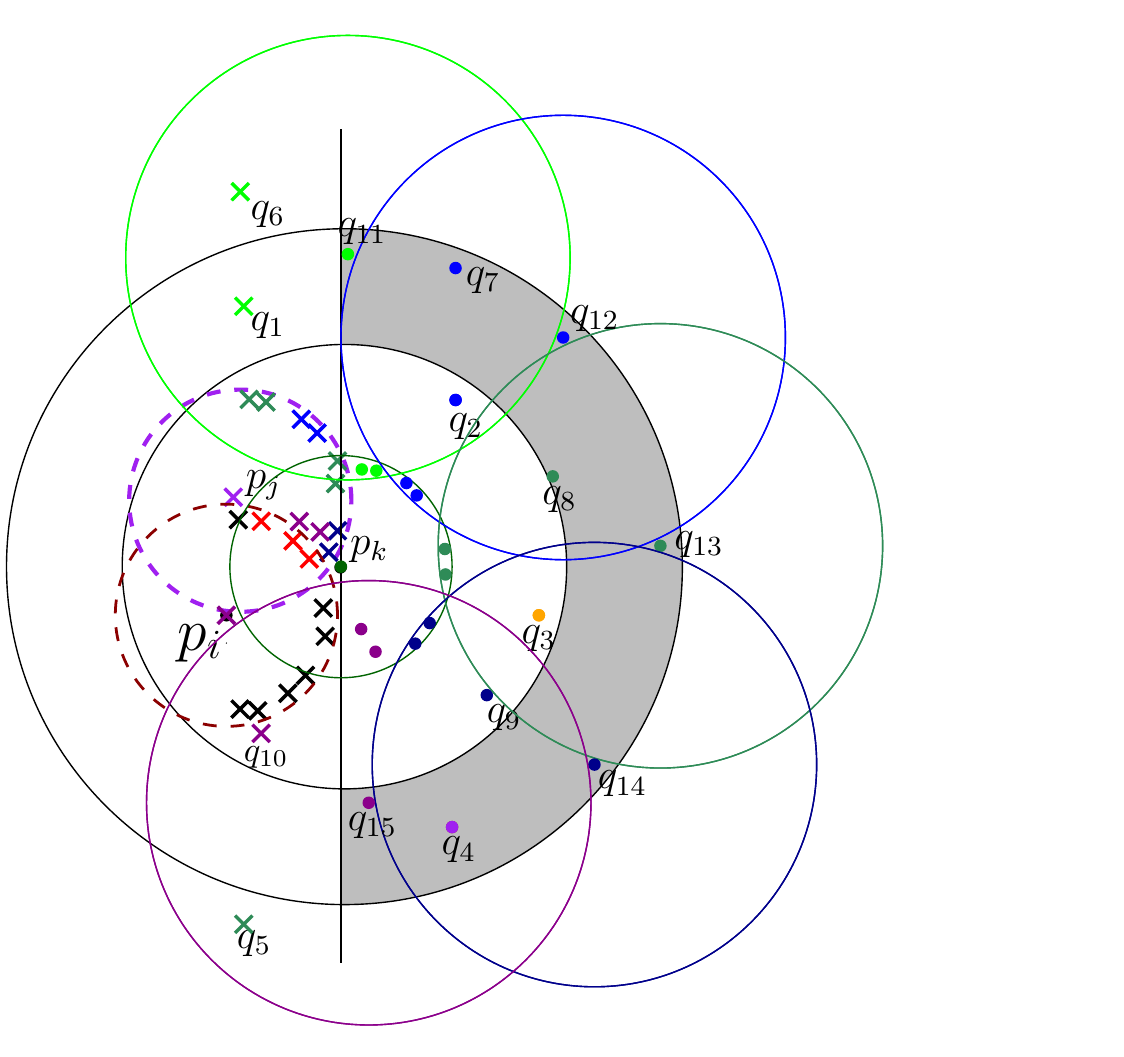}
  \caption{}  %\label{fig:proofb}
\end{subfigure}
\caption{(a) A point set: an instance, (b) 11 points chosen from $Cov_\frac{1}{2}(C(p_i))$ out of which 3 red points are in optimal solution, and (c) and (d) the selected red points for $Cov_\frac{1}{2}(C(p_i))$ will ensure liar's domination for $Cov_\frac{1}{2}(C(p_j))$ and $Cov_\frac{1}{2}(C(p_k))$.}\label{fig:counter}  %but the algorithm end up choosing another 11 points in the solution}%\vspace{-0.5cm}
\end{figure}     

\subsection{A 7.31-factor approximation algorithm}
In this Subsection, we propose a 7.31-factor approximation algorithm (see Algorithm \ref{algo:lds_apprx}) for minimum liar's dominating set (MLDS) problem in UDGs. The basic idea of the algorithm is: sequentially compute three maximal independent sets in the given UDG and add extra vertices, if necessary, to ensure liar's domination. In \cite{shang2008} Shang et. al. established a relation between maximal independent set and minimum $k$-dominating set\footnote{A minimum $k$-dominating set $D$ of $G$ is a minimum dominating set of $G$ with the property that every vertex not in $D$ should have at least $k$ dominators in $D$.} in UDGs. By using their result, we can have a 10-factor approximation algorithm for liar's dominating set in UDGs. In the following lemma, the proof idea is similar to \cite{shang2008}, we establish a relation between the cardinalities of maximal independent set and minimum liar's dominating set to obtain a 7.31-factor approximation algorithm for the MLDS problem in UDGs.

\begin{lemma}\label{lem:relation}
 Let $G=(V,E)$ be a UDG. If $I$ and $D_{opt}$ denote a maximal independent set and an MLDS of $G$, respectively, then $|I| \leq \sqrt{\frac{10}{3}}|D_{opt}|$.
\end{lemma}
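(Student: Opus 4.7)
The plan is to adapt the area-based double-counting argument of Shang et al.~\cite{shang2008} (which originally relates the size of a maximal independent set to that of a minimum $k$-dominating set in a UDG) to the liar's dominating set setting. First I would invoke the first condition of liar's domination, which gives $|N_G[v]\cap D_{opt}|\ge 2$ for every $v\in I$. Summing over $v\in I$ and exchanging the order of summation yields
\[
  2|I|\;\le\;\sum_{v\in I}|N_G[v]\cap D_{opt}|\;=\;\sum_{d\in D_{opt}} A(d),
\]
where $A(d):=|N_G[d]\cap I|$.

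Next, I would bound $A(d)$ geometrically. The vertices of $I$ lying in $N_G[d]$ all sit inside the closed unit disk centered at $d$ and, by independence of $I$, are pairwise at Euclidean distance strictly greater than $1$; the classical pentagonal packing bound therefore gives $A(d)\le 5$. Combined naively with the previous step this yields only the weaker estimate $|I|\le(5/2)\,|D_{opt}|$.

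To sharpen the ratio to $\sqrt{10/3}$ I would apply the Cauchy--Schwarz inequality
\[
  (2|I|)^2\;\le\;\Bigl(\sum_{d\in D_{opt}} A(d)\Bigr)^2\;\le\;|D_{opt}|\sum_{d\in D_{opt}}A(d)^2,
\]
reducing the problem to upper-bounding $\sum_d A(d)^2$. For this I would combine the pointwise estimate $A(d)\le 5$ with an area-packing argument: the open disks of radius $1/2$ around the vertices of $I$ are pairwise disjoint, and those corresponding to vertices of $N_G[d]\cap I$ all lie inside $B(d,3/2)$, whose area $9\pi/4$ together with the hexagonal packing density $\pi/(2\sqrt{3})$ bounds the number of such disks at $d$. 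This converts the discrete bound $A(d)\le 5$ into a quantitative integrated estimate on $\sum_d A(d)^2$ in terms of $|D_{opt}|$, which when substituted into the Cauchy--Schwarz inequality gives the desired $|I|\le\sqrt{10/3}\,|D_{opt}|$.

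The main obstacle lies in extracting the precise constant $\sqrt{10/3}$ rather than a looser number such as $5/2$ or $10/3$: a rough application of the packing/area step gives only the weaker bounds, so one must carefully exploit both the independence of $I$ (via the disjointness of the radius-$1/2$ disks) and the containment $N_G[d]\cap I\subseteq B(d,1)$ so that the Cauchy--Schwarz step tightens cleanly to the stated constant.
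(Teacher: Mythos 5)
There is a genuine gap, and it is structural rather than a matter of tightening constants. Your argument uses only the \emph{first} liar's condition, $|N_G[v]\cap D_{opt}|\ge 2$, but that condition alone provably cannot yield the factor $\sqrt{10/3}\approx 1.83$: place two hubs $h_1,h_2$ at distance $0.01$ and five mutually independent points on a circle of radius $0.99$ around them (pairwise distance $2(0.99)\sin 36^\circ>1$, each within distance $1$ of both hubs). Then $\{h_1,h_2\}$ double-dominates every vertex while the five leaves form a maximal independent set, so any bound derived from double domination alone cannot be better than $5/2$ --- and indeed the per-vertex count $2|I|\le\sum_{d}A(d)\le 5|D_{opt}|$ is tight at $5/2$. (That $\{h_1,h_2\}$ violates the \emph{second} liar's condition for pairs of leaves is exactly the point: the constant $\sqrt{10/3}$ lives in the pairwise condition, which your proof never touches.) The Cauchy--Schwarz step also does not help: since $A(d)\le 5$ is attainable (five independent points fit in a unit disk), the only available estimate is $\sum_d A(d)^2\le 5\sum_d A(d)$, and substituting this into $(2|I|)^2\le|D_{opt}|\sum_d A(d)^2$ merely reproduces $|I|\le\frac{5}{2}|D_{opt}|$; the proposed area/packing-density refinement inside $B(d,3/2)$ gives a bound weaker than $5$, not stronger.

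The paper's proof instead double-counts over \emph{pairs}, which is how the second condition enters. Set $I'=I\cap D_{opt}$, $X=I\setminus I'$, $Y=D_{opt}\setminus I'$. Since $I$ is independent, no vertex of $I'$ lies in $N[u]$ for $u\in X$, so the second liar's condition gives $c_{u,v}=|(N[u]\cup N[v])\cap Y|\ge 3$ for every pair $u,v\in X$, whence the total pair--vertex incidence count is at least $3\binom{|X|}{2}$. Conversely, each vertex of $Y$ has at most $5$ independent neighbours, so $d_{u',v'}=|(N[u']\cup N[v'])\cap X|\le 10$ for every pair $u',v'\in Y$, bounding the corresponding count by $10\binom{|Y|}{2}$. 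Equating the two counts yields $3|X|(|X|-1)\le 10|Y|(|Y|-1)$, hence $|X|\le\sqrt{10/3}\,|Y|$ and $|I|=|X|+|I'|\le\sqrt{10/3}\,|D_{opt}|$. To salvage your approach you would have to replace the per-vertex sum by a sum over pairs so that the hypothesis $|(N[u]\cup N[v])\cap D_{opt}|\ge 3$ can be used; a single-vertex counting argument cannot see it.
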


\begin{proof}
 Let $I'=I\cap D_{opt}$, $X=I\setminus I'$, $Y=D_{opt}\setminus I'$. For $u,v\in X$, let $c_{u,v}$ denote the number of vertices in $Y$ which lie in the closed neighborhoods of $u$ and $v$ in $G$, i.e., $c_{u,v}=|(N[u]\cup N[v])\cap Y|$. As $D_{opt}$ is a liar's dominating set of $G$, $c_{u,v}\geq 3$ for each $u,v\in X$, and  we get $\sum_{u,v\in X}c_{u,v}\geq 3\cdot \frac{|X|(|X|-1)}{2}$. For  $u',v'\in Y$, analogues to $c_{u,v}$, let $d_{u',v'}=|(N[u']\cup N[v'])\cap X|$. As $G$ is a UDG, for each vertex in $Y$ there can be at most 5 independent vertices in its neighborhood, and thus $d_{u',v'}\leq 10$ for each $u'$ and $v'$ in $Y$. Hence, we get $10\cdot \frac{|Y|(|Y|-1)}{2} \geq \sum_{u',v'\in Y}d_{u',v'}$. Note that the number of edges in $E$ induced between $X$ and $Y$ is $\sum_{u,v\in X}c_{u,v}(=\sum_{u',v'\in Y}d_{u',v'})$. Thus, we have $3\cdot \frac{|X|(|X|-1)}{2}\leq 10\cdot \frac{|Y|(|Y|-1)}{2}$, which implies 
%  \\ $\implies 3\cdot (|X|-1)^2 < 10\cdot |Y|^2$\\ $\implies \sqrt{3}\cdot (|X|-1)<\sqrt{10} \cdot |Y|$\\
 $|X| \leq \sqrt{\frac{10}{3}}|Y|$. Therefore, $|I| = |X| + |I'| \leq \sqrt{\frac{10}{3}}|Y| + |I'| \leq \sqrt{\frac{10}{3}}|D_{opt}|$.
\end{proof}
\begin{algorithm}[!ht]
 \caption{Liar's dominating set in UDG}\label{algo:lds_apprx}
\begin{algorithmic}[1]
\Require An UDG $G=(V,E)$
\Ensure A liar's dominating set $D$ of $G=(V,E)$
 \State $i\leftarrow 0$, $I_i \leftarrow \emptyset$, and $D \leftarrow \emptyset$
 %\State $D(N_G^r[v]) \leftarrow \emptyset$ and $N\leftarrow N_G^r[v]$
 %\While{($N\neq \emptyset$)}
      \For {$(i=1 \;to\; 3)$}\label{loop:first_for_start}
	  \If {($V\neq \emptyset$)}
	      \State $I_i\leftarrow MIS(V)$ \Comment{$MIS(\cdot)$ returns a maximal independent set} \label{mis:subroutine}
	      \State $D \leftarrow D \cup I_i$; $V\leftarrow V\setminus I_i$
	     % \State $N\leftarrow N\setminus I_j$
	  \EndIf
      \EndFor \label{loop:first_for_end}
\For {every $u \in I_1$} \label{loop:sec_for_start}
      \If{$N_G(u)\cap (I_2\cup I_3)=\emptyset$}
      \State let $v \in N_G(u)$
	  \State $D = D \cup \{v\}$\label{line:add_v}
	  %\Comment{$v$ is an arbitrary neighbor of $u$}
	  \ElsIf{$|N_G(u)\cap (I_2\cup I_3)|=1$}
	  \State let $w$ be a neighbor of $v\in N_G(u)\cap (I_2\cup I_3)$ such that $w\neq u$
	  \State $D = D \cup \{w\}$ \label{line:add_w}
      \EndIf
 \EndFor \label{loop:sec_for_end}
 \State \bf{return} $D$
 
\end{algorithmic}
\end{algorithm}

\begin{lemma}\label{lem:lds}
 The set $D$ returned by Algorithm \ref{algo:lds_apprx} is an LDS of $G$.
\end{lemma}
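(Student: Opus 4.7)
The plan is to verify the two defining properties of an LDS separately, exploiting three facts about the algorithm. First, $I_1 \cup I_2 \cup I_3 \subseteq D$ from the first loop. Second, since each $I_k$ is a maximal independent set of $G$ with the previously chosen $I_j$'s removed, every vertex of $V \setminus (I_1 \cup \cdots \cup I_{k-1})$ that is not itself in $I_k$ has a neighbor in $I_k$; in particular every $v \notin I_1 \cup I_2 \cup I_3$ has a neighbor in each of $I_1, I_2, I_3$. Third, the second loop processes each $u \in I_1$ with at most one neighbor in $I_2 \cup I_3$ and inserts one carefully chosen vertex into $D$.

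For the first LDS condition I would split on the location of $v \in V$. If $v \notin I_1 \cup I_2 \cup I_3$, the second fact yields $|N_G[v] \cap D| \geq 3$. If $v \in I_2 \cup I_3$, then $v \in D$ and $v$ has a neighbor in $I_1 \subseteq D$. If $v \in I_1$, then $v \in D$, and depending on $|N_G(v) \cap (I_2 \cup I_3)|$ the second loop either explicitly adds a neighbor of $v$ to $D$ (Case 1) or exposes a $D$-neighbor of $v$ already present in $I_2 \cup I_3$ (Case 2 or Case 3).

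For the second LDS condition I would again case-split on $v_i, v_j$. If either lies outside $I_1 \cup I_2 \cup I_3$, the bound $|(N_G[v_i] \cup N_G[v_j]) \cap D| \geq 3$ is inherited from the first-condition argument. Otherwise both $v_i, v_j$ lie in $D$, so two witnesses are free, and the task is to supply a third in $N_G[v_i] \cup N_G[v_j]$. If both lie in $I_2 \cup I_3$, each has a neighbor in $I_1 \subseteq D$ that serves as the witness. If $v_i \in I_1$ and $v_j \in I_2 \cup I_3$: either $v_j$ has some neighbor in $I_1$ distinct from $v_i$ (done), or $v_j \in N_G(v_i) \cap (I_2 \cup I_3)$, in which case either $v_i$ has a second such neighbor (in $D$), or $v_i$ triggers Case 2 and the added $w \in N_G(v_j) \setminus \{v_i\}$ is the witness. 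If both lie in $I_1$: if either has a neighbor in $I_2 \cup I_3 \subseteq D$ that neighbor is the witness, and if both trigger Case 1, the added $z_i \in N_G(v_i), z_j \in N_G(v_j)$ are in $D$ and independence of $I_1$ ensures $z_i \neq v_j$.

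The main obstacle is the second condition when both $v_i, v_j \in I_1$, because the third witness must come from the second loop rather than from the MIS-chain structure; one has to examine every sub-combination of Case 1 vs.\ Case 2/3 for $v_i$ and $v_j$ and invoke independence of $I_1$ to rule out coincidence of the added vertex with $v_i$ or $v_j$. A secondary subtlety is the degenerate scenario in which the unique $I_2 \cup I_3$-neighbor $v$ of some $u \in I_1$ has $u$ as its sole neighbor: there the Case 2 choice of $w$ needs the standing hypothesis that an LDS of $G$ exists (so that $|N_G[u] \cup N_G[v]| \geq 3$, forcing $u$ to have further neighbors available for selection).
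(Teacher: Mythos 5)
Your proposal is correct and follows essentially the same route as the paper: verify the two liar's-domination conditions by casing on membership in $I_1$, $I_2$, $I_3$, or $V\setminus(I_1\cup I_2\cup I_3)$, using that every excluded vertex has a dominator in each $I_k$ and that the second loop patches the vertices of $I_1$. If anything, your treatment is slightly more thorough than the paper's, which does not explicitly handle the pair $v_i,v_j\in I_1$ nor the degenerate case where the prescribed neighbor $w$ of $v$ might fail to exist --- both of which you flag.
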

\begin{proof}
 Algorithm \ref{algo:lds_apprx} sub-sequentially computes three maximal independent sets $I_1$, $I_2$, and $I_3$ in $G$ (see line numbers \ref{loop:first_for_start}-\ref{loop:first_for_end}). Let $I = I_1 \cup I_2 \cup I_3$.
 Note that any vertex not in $I$ has a neighbor (dominator) in each $I_1, I_2$, and $I_3$. Thus, each vertex (resp. every pair of distinct vertices) not in $I_1 \cup I_2 \cup I_3$ satisfies the first (resp. second) condition of liar's domination. Also, every vertex in $I_3$ is adjacent to a vertex in $I_1$ and $I_2$. Thus, the vertices in $I_3$ satisfy both the conditions of the lair's domination.  Similarly, every vertex in $I_2$ is adjacent to a vertex in $I_1$ (otherwise, $I_1$ cannot be a maximal independent set) and, hence, the vertices in $I_2$ satisfy both the conditions of the lair's domination.  For any vertex $u \in I_1$, if $N_G(u)\cap (I_2\cup I_3)=\emptyset$, then the algorithm adds an arbitrary neighbor $v$ of $u$ to $D$ (see line number \ref{line:add_v}).
 If $N_G(u)\cap (I_2\cup I_3)\neq \emptyset$, then $u$ has neighbor in $I_2\cup I_3$. In either case the vertices in $I_1$ satisfy the two conditions of the liar's domination. 
 
 For any pair of distinct vertices $u \in I$ and $v \in V \setminus I$, the second condition is already satisfied as $v$ has a neighbor in each $I_1, I_2$, and $I_3$. Similarly, for any pair of distinct vertices $u \in I_1$, and $v \in I_2$, if $v$ has multiple neighbors in $I_1$ or $u$ has multiple neighbors in $I_2$, then the second condition is satisfied. If $u$ is the only neighbor of $v$ in $I_1$ and vice versa, then the algorithm adds an arbitrary neighbor $w$ of $v$ to $D$, see line number \ref{line:add_w}, and thus the second condition is ensured for $u$ and  $v$.  If $v \in I_3$, the second condition is trivially holds as $v$ has a neighbor in each $I_1$ and $I_2$. Therefore, $D$ is an LDS in $G$.
\end{proof}

\begin{theorem}
 For a given UDG $G =(V,E)$, Algorithm \ref{algo:lds_apprx}
 achieves approximation ratio 7.31 for the MLDS problem in $O(|V|+|E|)$ time. 
\end{theorem}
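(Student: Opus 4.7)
The plan is to invoke Lemma \ref{lem:lds} for correctness (so the returned $D$ is indeed an LDS) and then bound $|D|$ by counting separately the vertices contributed by the three independent sets $I_1, I_2, I_3$ and the extra vertices added in the second loop. The target ratio $4\sqrt{10/3}\approx 7.303$ emerges naturally from this decomposition combined with Lemma \ref{lem:relation}.

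The first step is to apply Lemma \ref{lem:relation} to each of $I_1, I_2, I_3$. A key observation is that although the algorithm computes $I_2$ (resp.\ $I_3$) as a maximal independent set of $G[V\setminus I_1]$ (resp.\ $G[V\setminus(I_1\cup I_2)]$), each $I_j$ is nonetheless an independent set of the original UDG $G$; and a careful reading of the proof of Lemma \ref{lem:relation} shows that it uses only independence of $I$ (for the ``at most five independent neighbours per vertex'' estimate, which gives $d_{u',v'}\leq 10$) together with the fact that $D_{opt}$ is an LDS of $G$. Hence $|I_j|\leq \sqrt{10/3}\,|D_{opt}|$ for $j=1,2,3$.

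The second step is to bound the number of extra vertices. The loop in lines \ref{loop:sec_for_start}--\ref{loop:sec_for_end} executes once per $u\in I_1$ and in each iteration adds at most one vertex (either the $v$ from line \ref{line:add_v} or the $w$ from line \ref{line:add_w}); so at most $|I_1|$ extra vertices are introduced. Combining with the previous step,
\[
|D|\;\leq\;|I_1|+|I_2|+|I_3|+|I_1|\;\leq\;4\sqrt{\tfrac{10}{3}}\,|D_{opt}|\;<\;7.31\,|D_{opt}|.
\]
For the runtime, a single greedy sweep produces a maximal independent set of a UDG in $O(|V|+|E|)$ time, so the three successive MIS computations cost $O(|V|+|E|)$ in total. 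The second loop iterates over $I_1$ and, for each $u\in I_1$, scans $u$'s adjacency list (and, in the second branch, the adjacency list of one neighbour) to locate the required vertex $v$ or $w$; summed over $I_1$ this is again $O(|V|+|E|)$.

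The main conceptual subtlety to defend is the use of Lemma \ref{lem:relation} on $I_2$ and $I_3$, which are not maximal independent sets of $G$ but only of residual subgraphs; the bound survives because the lemma's proof exploits only the independence of $I$ inside $G$, not its maximality. The remaining arithmetic check $4\sqrt{10/3}<7.31$ and the accounting of the linear-time behaviour are routine.
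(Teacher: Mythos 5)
Your proof is correct and reaches the same bound $|D|\leq 2|I_1|+|I_2|+|I_3|\leq 4\sqrt{10/3}\,|D_{opt}|<7.31\,|D_{opt}|$ as the paper, but it handles the crucial middle step differently, and in a way that is actually more careful than the published argument. The paper applies Lemma \ref{lem:relation} only to $I_1$ (the one set that genuinely is a maximal independent set of $G$) and collapses the sum to $4|I_1|$ by asserting ``without loss of generality $|I_3|\leq|I_2|\leq|I_1|$''; that assumption is not a genuine WLOG, since $I_2$ is only maximal in $G[V\setminus I_1]$ and can be larger than $I_1$ (a star $K_{1,n}$ with $I_1$ the centre gives $|I_1|=1$ and $|I_2|=n$), so the paper's intermediate inequality $|D|\leq 4|I_1|$ can fail even though the final ratio does not. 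You instead bound each $|I_j|$ separately by $\sqrt{10/3}\,|D_{opt}|$, justifying this by the correct observation that the proof of Lemma \ref{lem:relation} uses only the independence of $I$ in $G$ (for $N[u]\cap I'=\emptyset$ and for the five-independent-neighbours packing bound) together with $D_{opt}$ being an LDS of $G$, and never uses maximality; since each $I_j$ is independent in $G$, the lemma's argument applies verbatim. This buys you a proof with no unjustified ordering assumption, at the small cost of having to reopen the lemma's proof rather than cite its statement; ideally you would record the strengthened statement (``for any independent set $I$ of $G$'') as a remark. Your accounting of the at most $|I_1|$ extra vertices and of the $O(|V|+|E|)$ running time matches the paper and is fine.
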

\begin{proof}
Let $D^*$ be an MLDS of $G$. Algorithm \ref{algo:lds_apprx} sequentially computes three maximal independent sets $I_1$, $I_2$, and $I_3$ in $G$ and $I = I_1 \cup I_2 \cup I_3$ is not necessarily be an LDS of $G$ as there might be some vertices with one of the following cases: (i) a vertex $u \in I_1$ not satisfying the first condition, or (ii) a pair of distinct vertices $u \in I_1$ and $v\in I_2$ not satisfying the second condition of the liar's domination. In the former case we add an arbitrary neighbor of $u$, and in the latter case we add an neighbor $w$ of $v$. Note that in either case such a neighbor is guaranteed to exist in $G$, and $|D| \leq |I| + |I_1|$. Without loss of generality we can assume that $|I_3| \leq |I_2| \leq |I_1|$. Therefore, $|D| \leq 4|I_1|\leq 4\cdot \sqrt{\frac{10}{3}} |D^*|\leq 7.31|D^*|$ (by Lemma \ref{lem:relation}). The running time follows as Algorithm \ref{algo:lds_apprx} uses the subroutine $MIS(\cdot)$ 
(in line number \ref{mis:subroutine}) to compute a maximal independent set. 
\end{proof}

\section{Approximation Scheme} \label{ptas}
In this section, we propose a PTAS for the MLDS problem in UDGs, 
i.e., for a given UDG $G=(V,E)$ and a parameter $\epsilon > 0$, we propose an
algorithm which produces a liar's dominating set of size no
more than $(1 + \epsilon)$ times the size of a minimum liar's dominating set in G.
We use $\delta_G(u,v)$ to denote the number of edges on a shortest path 
between $u$ and $v$ in $G$. For $A,B \subseteq V$, $\delta_G(A,B)$ denotes the
distance between $A$ and $B$ and is defined as
$\delta_G(A,B)=\min_{u\in A,v\in B}\{\delta_G(u,v)\}$. 
For $A\subseteq V$, $D(A)$ and $D_{opt}(A)$ denote an LDS and an optimal (minimum size) 
LDS of $A$ in $G$, respectively. We define the closed 
neighborhood of a set $A\subseteq V$ as $N_G[A] = \bigcup_{v\in A}N_G[v]$.

% \begin{wrapfigure}{r}{0.4\textwidth}
  % \vspace{-0.5cm}
\begin{figure}[h]
\centering
 \includegraphics[scale=0.55]{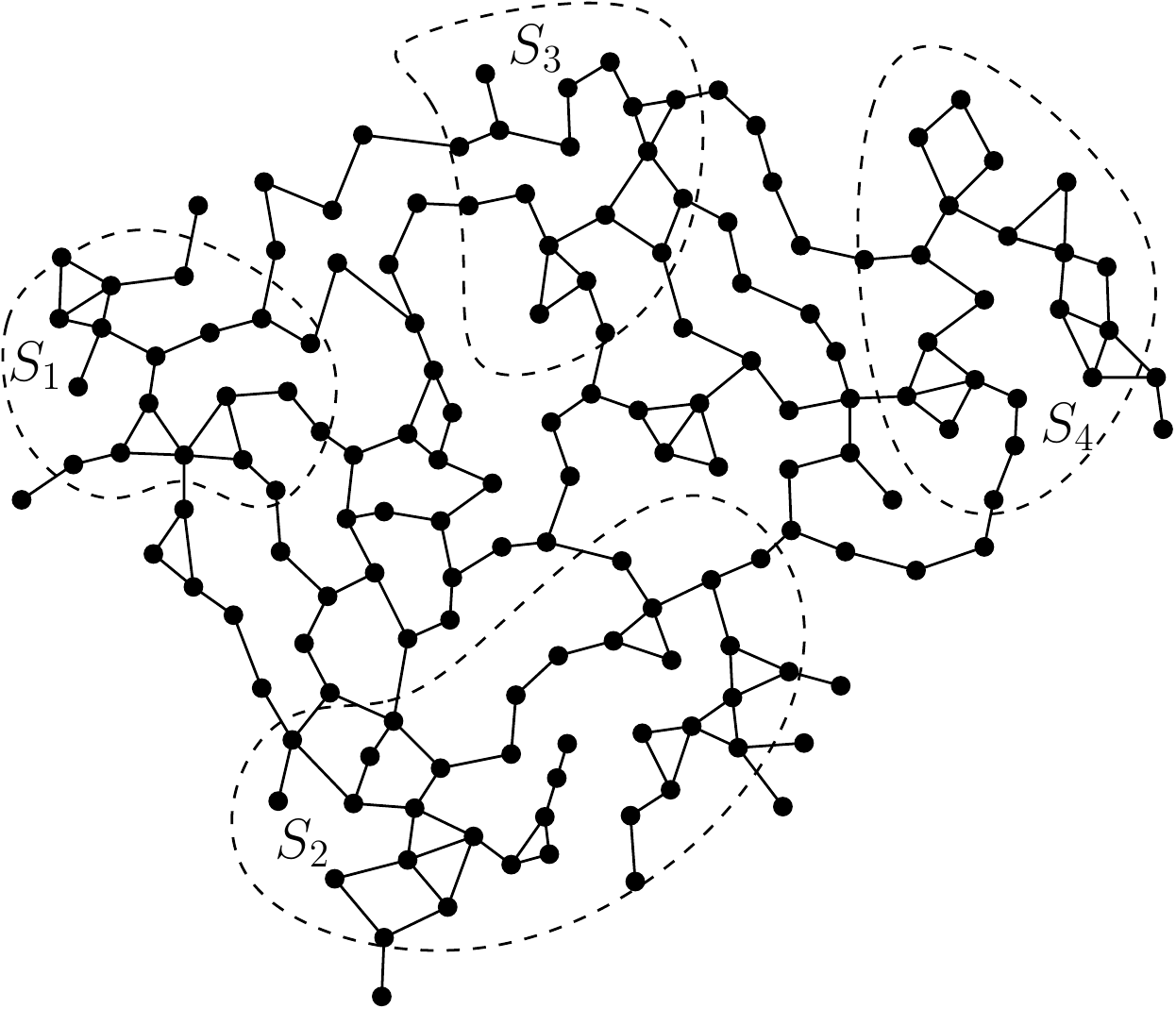}
   % \vspace{-8pt}
 \caption{A 4-separated collection $S=\{S_1,S_2,S_3,S_4\}$}\label{sc}
  % \vspace{-0.5cm}
\end{figure}
%\end{wrapfigure}

The proposed PTAS is based on the concept of $m$-separated collection
of subsets of $V$ ($m\geq 4$). Let $G=(V,E)$ be a UDG. A collection $S = \{S_1,S_2,\ldots,S_k\}$ such 
that $S_i \subseteq V$ for $i=1,2,\ldots,k$, is said to be an $m$-separated
collection, if $\delta_G(S_i,S_j)>m$, for $1\leq i\leq k$ and $1 \leq j \leq k$ (see Figure \ref{sc} 
for a 4-separated collection). Nieberg and Hurink \cite{nieberg} introduced 2-separated collection to
propose a PTAS for the minimum dominating set problem in UDGs and our PTAS follows form it. However, the algorithm in \cite{nieberg} cannot be directly applied as in intermediate steps of the algorithm we need to add extra nodes (see line numbers \ref{algo:if_start}-\ref{algo:if_end} in Algorithm \ref{lds}) to ensure the liar's domination. We argue that the extra nodes added are small enough and do not effect the approximation factor.

\begin{lemma}\label{4-sc}
 Let $S=\{S_1,S_2,\ldots,S_k\}$ be an $m$-separated collection. If 
 $|S_i|\geq 3$ for $1 \leq i \leq k $, then  
 $\sum_{i=1}^{k}|D_{opt}(S_i)| \leq |D_{opt}(V)|$.
\end{lemma}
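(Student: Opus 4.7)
The plan is to show, for each $i$, that a suitable ``piece'' of $D_{opt}(V)$ is itself a liar's dominating set of $S_i$, and that the pieces are pairwise disjoint. Then simple additivity gives the bound.

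First I would exploit the $m$-separation hypothesis (with $m\ge 4$) to show that the closed neighborhoods $N_G[S_1],\ldots,N_G[S_k]$ are pairwise disjoint. Indeed, if some $u$ lay in both $N_G[S_i]$ and $N_G[S_j]$ with $i\ne j$, then there would exist $u_i\in S_i$ and $u_j\in S_j$ with $u_i$ either equal or adjacent to $u$ and likewise for $u_j$, giving $\delta_G(S_i,S_j)\le \delta_G(u_i,u)+\delta_G(u,u_j)\le 2$, which contradicts $\delta_G(S_i,S_j)>m\ge 4$. (In fact $m\ge 2$ already suffices for disjointness; the stronger $m\ge 4$ hypothesis is used elsewhere in the PTAS.)

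Next, for each $i$ I would define
\[
D_i \;=\; D_{opt}(V)\,\cap\, N_G[S_i]
\]
and verify that $D_i$ is an LDS of $S_i$ in $G$. The key observation is that for every $v\in S_i$ we have $N_G[v]\subseteq N_G[S_i]$, and therefore
\[
N_G[v]\cap D_{opt}(V) \;=\; N_G[v]\cap D_i.
\]
Since $D_{opt}(V)$ is an LDS of $V$, the first condition $|N_G[v]\cap D_{opt}(V)|\ge 2$ transfers verbatim to $D_i$, giving the first liar condition for each $v\in S_i$. The same substitution applied to a distinct pair $v,v'\in S_i$ turns $|(N_G[v]\cup N_G[v'])\cap D_{opt}(V)|\ge 3$ into $|(N_G[v]\cup N_G[v'])\cap D_i|\ge 3$, yielding the second liar condition (this step is where having $|S_i|\ge 3$ — or at least $\ge 2$ — makes the second condition non-vacuous). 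Hence $D_i$ is an LDS of $S_i$, and by optimality $|D_{opt}(S_i)|\le |D_i|$.

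Finally, I would conclude by disjointness: since the $N_G[S_i]$ are pairwise disjoint, the sets $D_i$ are pairwise disjoint subsets of $D_{opt}(V)$, so
\[
\sum_{i=1}^{k}|D_{opt}(S_i)|\;\le\;\sum_{i=1}^{k}|D_i|\;=\;\Bigl|\bigcup_{i=1}^{k}D_i\Bigr|\;\le\;|D_{opt}(V)|.
\]
I expect the only subtle point to be the disjointness of the closed neighborhoods, which needs the separation parameter to be at least $2$; the rest is a direct transfer of the liar's conditions through the inclusion $N_G[v]\subseteq N_G[S_i]$ for $v\in S_i$.
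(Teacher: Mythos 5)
Your proposal is correct and follows essentially the same route as the paper: restrict $D_{opt}(V)$ to pairwise disjoint neighborhoods of the $S_i$, check that each restriction is still an LDS of $S_i$, and sum. The only (harmless) difference is that you intersect with the closed neighborhood $N_G[S_i]$ while the paper uses the distance-$2$ neighborhood of $S_i$; your tighter choice works because every vertex of $D_{opt}(V)$ relevant to the liar's conditions for $v\in S_i$ lies in $N_G[v]\subseteq N_G[S_i]$, and it even needs only $m\ge 2$ for disjointness, as you note.
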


\begin{proof}

 Observe that $N_G[S_i]\cap N_G[S_j] = \emptyset$  for $i \neq j$ and $1\leq i,j\leq k$. Also, 
 $D_{opt}(S_i)\cap D_{opt}(S_j) = \emptyset$  as $S_i$ and $S_j$ 
 are $m$-separated. 
 Let $S_i'=\{u\in V\mid v\in S_i \text{ and } \delta_G(u,v)\leq 2\}$ for $i=1,2,\ldots,k$. 
 Observe that $S_i\subseteq S_i'$ and $S_i'\cap D_{opt}(V)$ is a liar's 
 dominating set of $S_i$ for $i=1,2,\ldots,k$. Since, $\delta_G(S_i,S_j)>m (\geq 4)$ 
 for $i\neq j$, implies $S_i'\cap S_j'=\emptyset$. Therefore, 
 $(S_i' \cap D_{opt}(V))\cap (S_j' \cap D_{opt}(V)) = \emptyset$ and 
 $\bigcup_{i=1}^k(S_i'\cap D_{opt}(V))\subseteq D_{opt}(V)$. Also, 
 $|D_{ opt }(S_i)| \leq |S_i'\cap D_{opt}(V)|$ for $i=1,2,\ldots,k$ as 
 $S_i' \cap D_{opt}(V)$ is a liar's dominating set of $S_i$, and 
 $D_{opt}(S_i)$ is a minimum size liar's dominating set. Thus, 
 $\sum_{i=1}^k |D_{opt}(S_i)|\leq\sum_{i=1}^k|S_i'\cap D_{opt}(V)|\leq |D_{opt}(V)|$.
\end{proof}

\begin{lemma}\label{lem_ptas}
 Let $S=\{S_1,S_2,\ldots,S_k\}$ be an $m$-separated collection,
 and $N_1,N_2,\ldots,N_k$ be subsets of $V$ with $S_i\subseteq N_i$ for 
 all $i=1,2,\ldots,k$. If there exists $\rho\geq 1$ such that 
 $|D_{opt}(N_i)| \leq \rho |D_{opt}(S_i)|$ holds for all $i=1,2,\ldots,k$, and if
 $\bigcup_{i=1}^k D_{opt}(N_i)$ is a liar's dominating 
 set in $G$, then the value of  $\sum_{i=1}^k |D_{opt}(N_i)|$ 
 is at most $\rho$ times the size of a minimum liar's dominating set in $G$.
\end{lemma}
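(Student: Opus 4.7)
The plan is to chain together two inequalities: the local approximation hypothesis on each $N_i$, and the global summation bound provided by Lemma \ref{4-sc}. Concretely, summing the assumed bound $|D_{opt}(N_i)| \le \rho\,|D_{opt}(S_i)|$ over $i=1,\ldots,k$ gives
\[
  \sum_{i=1}^k |D_{opt}(N_i)| \;\le\; \rho \sum_{i=1}^k |D_{opt}(S_i)|.
\]
Then I would invoke Lemma \ref{4-sc} to bound the right-hand side by $\rho\,|D_{opt}(V)|$, which is $\rho$ times the size of a minimum liar's dominating set of $G$. The role of the hypothesis that $\bigcup_{i=1}^k D_{opt}(N_i)$ is a liar's dominating set in $G$ is to make the left-hand side meaningful as (an upper estimate on) the size of the candidate solution produced for the whole graph, so that the combined inequality actually certifies the desired approximation factor.

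The main subtlety is the applicability of Lemma \ref{4-sc}: its statement requires $|S_i|\ge 3$ for every $i$. In the PTAS setting this is not a true obstacle because the $m$-separated collection is constructed by an expanding neighborhood procedure in which each $S_i$ is chosen large enough to make the local liar's domination conditions non-trivial; sets with $|S_i|<3$ would be folded into a neighboring $N_j$ beforehand. I would therefore add a short remark that we may assume $|S_i|\ge 3$, citing the construction used to build the separated collection, and then Lemma \ref{4-sc} applies verbatim.

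I do not expect any substantive difficulty: the proof is a two-line calculation once Lemma \ref{4-sc} is in hand. The only thing worth emphasizing is disjointness, namely that $D_{opt}(N_i)\cap D_{opt}(N_j)=\emptyset$ for $i\ne j$ (so that summing over $i$ correctly counts each vertex at most once and can be compared to $|\bigcup_i D_{opt}(N_i)|$). This follows because $\delta_G(S_i,S_j)>m\ge 4$, and by construction $D_{opt}(N_i)\subseteq N_G^{\le 2}[S_i]$ lives within distance $2$ of $S_i$, so the $D_{opt}(N_i)$'s are pairwise vertex-disjoint. With that remark in place, combining the two inequalities finishes the proof.
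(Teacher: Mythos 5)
Your proof is correct and is essentially the paper's own argument: sum the hypothesis $|D_{opt}(N_i)|\leq\rho|D_{opt}(S_i)|$ over $i$ and bound $\sum_i|D_{opt}(S_i)|$ by $|D_{opt}(V)|$ via Lemma \ref{4-sc}. Your extra remarks about $|S_i|\geq 3$ and pairwise disjointness are not needed for the stated conclusion (which concerns the sum, not the union), but they correctly flag a hypothesis of Lemma \ref{4-sc} that the paper leaves implicit.
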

\begin{proof}
$\sum_{i=1}^k |D_{opt}(N_i)|\leq \rho \sum_{i=1}^{k}|D_{opt}(S_i)|\leq \rho |D_{opt}(V)|$. 
The latter inequality follows from Lemma \ref{4-sc}. 
\end{proof}
%In the proposed PTAS we have chosen the concept of $4$-separated 
%collection to make $D_{opt}(N_i)\cap D_{opt}(N_j) = \emptyset$ for $i\neq j$.

\subsection{Algorithm}
In this section, we discuss the construction of a 4-separated collection 
$S = \{S_1,S_2,\ldots,S_k\}$ and subsets $N_1,N_2,\ldots,N_k$ of $V$ such that $S_i\subseteq N_i$
for all $i=1,2,\ldots,k$. The algorithm proceeds in an iterative
manner. Initially $V_1=V$. In the $i$-th iteration the algorithm computes $S_i$ and $N_i$.
For a given $\epsilon > 0$, the $i$-th iteration of the algorithm 
starts with an arbitrary vertex $v\in V_i$ and increases the value of 
$r (=2,3,\ldots)$ as long as $|D(N_G^{r+4}[v])|>\rho|D(N_G^r[v])|$ holds.
Here, $D(N_G^{r+4}[v])$ and $D(N_G^r[v])$ are liar's dominating sets
of $N_G^{r+4}[v]$ and $N_G^r[v]$, respectively, and $\rho = 1+\epsilon$. The
smallest $r$ violating the above condition, say $\hat{r}$, is obtained.
Set $S_i = N_G^{\hat{r}}[v]$ and $U_i =N_G^{\hat{r}+4}[v]$. Now, the removal of 
$U_i$ from $V_i$ may lead to some isolated (i) vertex $u\in V_i$, and/or  
(ii) connected component with two vertices $u,w\in V_i$. In case (i), 
for each such vertex $u$ find $x,y\in U_i$ such that $\{u,x,y\}$ forms 
a connected component and update $U_i$ as follows: $U_i=U_i\setminus \{x,y\}$. 
In case (ii), for each such pair of vertices $u,w$ find $x\in U_i$ 
such that $\{u,w,x\}$ forms a connected component and update $U_i$ as 
follows: $U_i=U_i\setminus \{x\}$. Set $N_i = U_i$ and $V_{i+1}=V_i\setminus N_i$.
The process stops if $V_{i+1}=\emptyset$ and returns the sets $S_i$s and $N_i$s. 
The collection of the sets $S_i$s is a 4-separated collection.
The pseudo code is given in Algorithm \ref{lds}.

The liar's dominating set of a $r$-th neighborhood of a vertex $v$, $D(N_G^r[v])$,
can be computed with respect to $G$ as described in Algorithm \ref{algo:lds_apprx}.
Algorithm \ref{algo:lds_apprx} successively finds 
maximal independent sets $I_1,I_2$, and $I_3$.
Now, $I_1\cup I_2 \cup I_3$ is a liar's dominating set
for $N_G^r[v] \setminus I_1$ as every vertex not in $I_1$ either belongs
to $I_2\cup I_3$ or is adjacent to at
least one vertex in each $I_1,I_2$, and $I_3$. To ensure
the liar's domination for the vertices in $I_1$,
for each vertex $u$ in $I_1$ we add at most a vertex (see line numbers \ref{loop:sec_for_start}-\ref{loop:sec_for_end} 
in Algorithm \ref{algo:lds_apprx}).
In summary, Algorithm \ref{lds} deals with obtaining an 
$m$-separated collection $S=\{S_1,S_2,\ldots,S_k\}$ and collection $N=\{N_1,N_2,\ldots,N_k\}$ 
such that $S_i\subseteq N_i \subseteq V$ and using Algorithm \ref{algo:lds_apprx} 
(that deals with obtaining a liar's  dominating set of the
$r$-th neighborhood of a vertex) as a sub-routine, Algorithm \ref{lds} computes a liar's dominating set for $G$.

\begin{algorithm}[t]
\caption{Liar's dominating set}\label{lds}
\begin{algorithmic}[1]
\Require A unit disk graph $G=(V,E)$ with $|V|\geq 3$ and an arbitrary small $\epsilon>0$
\Ensure A liar's dominating set ${\cal D}$ of $V$
 \State $i\leftarrow 0$ and $V_{i+1}\leftarrow V$
 \State ${\cal D}\leftarrow \emptyset$ and $\rho \leftarrow  1 + \epsilon$
 \While{($V_{i+1}\neq \emptyset$)}
      \State pick an arbitrary $v\in V_{i+1}$
      \State $N^0[v]\leftarrow v$ and $r\leftarrow 2$
     
      \While{$|(D(N_G^{r+4}[v])|>\rho|D(N_G^r[v])|$} \label{condition}\Comment{call Algorithm \ref{algo:lds_apprx}}
	  \State $r\leftarrow r+1$
      \EndWhile
      \State $\hat{r}\leftarrow r$ \Comment{the smallest $r$ violating while condition in step \ref{condition}}
      \State $i \leftarrow i+1$ \Comment{the index $i$ keeps track of the number of iterations}
      
      \State $S_i \leftarrow N_G^{\hat{r}}[v]$ and $U_i \leftarrow N_G^{\hat{r}+4}[v]$
      \If{($V_{i+1}\setminus N_G^{\hat{r}+4}[v]$ contains isolated components of size 1 and/or 2)}\label{algo:if_start}
	 \For{(each component, $\{u\}$, of size 1)}
	    \State find $x,y\in U_i$ such that $\{u,x,y\}$ is a connected component
	    \State $U_i \leftarrow U_i\setminus\{x,y\}$
	 \EndFor
	 \For{(each component, $\{u,w\}$, of size 2)}
	    \State find $x\in U_i$ such that $\{u,w,x\}$ is a connected component
	    \State $U_i \leftarrow U_i\setminus\{x\}$
	 \EndFor
      \EndIf \label{algo:if_end}
      \State $N_i \leftarrow U_i$
      \State ${\cal D} \leftarrow {\cal D} \cup D(N_i)$ \Comment{call Algorithm \ref{algo:lds_apprx}}
      \State $V_{i+1}\leftarrow V_i \setminus N_i$
 \EndWhile
 \State \bf{return} ${\cal D}$
\end{algorithmic}
\end{algorithm}

\begin{lemma}\label{bound}
 $D(N_G^r[v])$ is an LDS of $N_G^r[v]$ in $G$ and $|D(N_G^r[v])|\leq O(r^2)$.
\end{lemma}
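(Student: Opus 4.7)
The plan is to establish the two assertions separately. For the claim that $D(N_G^r[v])$ is an LDS of $N_G^r[v]$ in $G$, I would simply invoke Lemma \ref{lem:lds} applied to the unit-disk subgraph induced on $N_G^r[v]$. Algorithm \ref{algo:lds_apprx} examines only the adjacencies of its input graph, and those adjacencies coincide with the adjacencies of $G$ restricted to $N_G^r[v]$; hence the output set is an LDS of $N_G^r[v]$ in $G$. Moreover, the neighbors consulted in lines \ref{line:add_v} and \ref{line:add_w} of Algorithm \ref{algo:lds_apprx} are themselves drawn from the input set $N_G^r[v]$, so no vertex outside $N_G^r[v]$ is ever included in $D(N_G^r[v])$.

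For the quantitative bound $|D(N_G^r[v])|=O(r^2)$, I would use a standard disk-packing argument. Because $G$ is a UDG, every edge joins two disk centers at Euclidean distance at most $1$, so every vertex of $N_G^r[v]$ corresponds to a point within Euclidean distance $r$ of the point associated with $v$. Consequently all such points lie in a closed disk of radius $r$. For any independent set $I$ of the UDG restricted to $N_G^r[v]$, the points of $I$ are pairwise at Euclidean distance strictly greater than $1$, so the open disks of radius $1/2$ around them are pairwise disjoint and are contained in a disk of radius $r+1/2$. Comparing areas yields $|I|\leq (2r+1)^2$. I would then apply this bound to each of the three maximal independent sets $I_1,I_2,I_3$ produced by Algorithm \ref{algo:lds_apprx}, and note that the second loop (lines \ref{loop:sec_for_start}--\ref{loop:sec_for_end}) contributes at most one extra vertex per element of $I_1$. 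This gives
\[
|D(N_G^r[v])|\;\leq\;|I_1|+|I_2|+|I_3|+|I_1|\;\leq\;4(2r+1)^2\;=\;O(r^2).
\]

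There is no serious obstacle here; the only care required is to confirm that the packing bound applies uniformly to each of $I_1,I_2,I_3$ (which it does, since they are independent sets in the same restricted UDG, regardless of the order in which Algorithm \ref{algo:lds_apprx} extracts them), and that the number of corrective vertices appended in the second loop never exceeds $|I_1|$. Both points are immediate from inspection of Algorithm \ref{algo:lds_apprx}, so the lemma follows.
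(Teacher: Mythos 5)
Your proposal is correct and follows essentially the same route as the paper: invoke Lemma \ref{lem:lds} for the feasibility claim and use the standard disk-packing argument to bound each maximal independent set by $O(r^2)$, with the extra loop contributing at most $|I_1|$ additional vertices. The only (immaterial) difference is that you pack radius-$\tfrac{1}{2}$ disks into a disk of radius $r+\tfrac{1}{2}$, obtaining the constant $(2r+1)^2$, whereas the paper bounds $|D(N_G^r[v])|$ by $4|I_1|$ with a packing constant of $(r+1)^2$; both yield $O(r^2)$.
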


\begin{proof}
 Algorithm \ref{algo:lds_apprx} computes $D(N_G^r[v])$ by first computing
 maximal independent sets $I_1, I_2$, and $I_3$ subsequently and then it adds at most one vertex for
 each vertex in $I_1$ to ensure that $D(N_G^r[v])$ is a feasible
 solution. We can show $D(N_G^r[v])$ is an LDS of $N_G^r[v]$ in $G$ as in the proof of Lemma \ref{lem:lds}.
 Hence, $|D(N_G^r[v])|\leq 4\cdot |I_1|\leq 4\cdot \frac{\pi(r+1)^2}{\pi(1)^2}=O(r^2)$. 
 The latter inequality follows from the standard area argument, 
 the number of non-intersecting unit disks can be packed in 
 a larger disk of radius $r+1$ centered at $v$.
\end{proof}

\begin{lemma}\label{violate}
 In each iteration of Algorithm \ref{lds}, there exists an $r$ violating the condition
 $|(D(N_G^{r+4}[v])|>\rho|D(N_G^r[v])|$, where $\rho = 1 + \epsilon$.
\end{lemma}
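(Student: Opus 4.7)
The plan is to argue by contradiction, pitting an exponential lower bound (forced by the iterated loop condition) against the polynomial upper bound from Lemma \ref{bound}. Fix an iteration of Algorithm \ref{lds} and the vertex $v$ chosen inside it, and suppose toward a contradiction that no $r \geq 2$ violates the loop condition, so that $|D(N_G^{r+4}[v])| > \rho \cdot |D(N_G^r[v])|$ holds for every integer $r \geq 2$, where $\rho = 1 + \epsilon > 1$.

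First I would iterate this inequality in blocks of four starting from $r = 2$ and establish by induction on $t$ that
\[
|D(N_G^{2+4t}[v])| \;>\; \rho^{t} \cdot |D(N_G^{2}[v])|
\]
for every $t \geq 0$. Since $v \in N_G^{2}[v]$ and the first condition of liar's domination forces every vertex in the target set to have at least one dominator, we have $|D(N_G^{2}[v])| \geq 1$, and hence $|D(N_G^{2+4t}[v])| > (1+\epsilon)^{t}$, which grows exponentially in $t$.

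Next I would invoke Lemma \ref{bound} to get the polynomial bound $|D(N_G^{r}[v])| = O(r^{2})$, so in particular $|D(N_G^{2+4t}[v])| = O((2+4t)^{2}) = O(t^{2})$. Since $(1+\epsilon)^{t}$ eventually exceeds every polynomial in $t$, the exponential lower bound and the polynomial upper bound cannot both hold for all $t$, producing the required contradiction and proving that some smallest $r$ violating the loop condition must exist.

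The only real obstacle is confirming the base case $|D(N_G^{2}[v])| \geq 1$ that launches the geometric iteration, which is immediate from $v \in N_G^{2}[v]$. Beyond this, the argument is a standard exponential-versus-polynomial comparison and requires no delicate combinatorial estimate; in particular, the analysis is independent of the details of how Algorithm \ref{algo:lds_apprx} builds $D(\cdot)$, using only that it yields a feasible LDS whose size obeys the $O(r^2)$ bound.
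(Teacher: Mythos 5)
Your proposal is correct and follows essentially the same route as the paper: assume the loop condition holds for all $r$, iterate it in steps of four to force exponential growth of $|D(N_G^{r}[v])|$ in $r$, and contradict the $O(r^2)$ upper bound from Lemma \ref{bound}. The only cosmetic difference is that you track a single residue class ($r \equiv 2 \pmod 4$) with the base bound $|D(N_G^2[v])| \geq 1$, whereas the paper handles both cases $r=4k$ and $r=4k+s$ and uses $|D(N_G^2[v])|\geq 3$; either suffices for the existence claim.
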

\begin{proof}
We prove the lemma by contradiction. Suppose there exists $v\in V$
such that $|(D(N_G^{r+4}[v])|>\rho|D(N_G^r[v])|$ for $r=2,3,\ldots$.
Observe that $|D(N_G^2[v])|\geq 3$ as there exists at least three vertices in $G$.\\
If $r = 4k$, $4(r+1)^2\geq|(D(N_G^r[v])|>\rho|D(N_G^{r-4}[v])|>\cdots>\rho^\frac{r}{4}|D(N_G^2[v])|\geq3\rho^\frac{r}{4}$, and \\ if $r =4k + s$ for $1 \leq s \leq 3$, \\
$4(r+1)^2\geq|(D(N_G^r[v])|>\rho|D(N_G^{r-4}[v])|>\cdots>\rho^\frac{r-1}{4}|D(N_G^3[v])|\geq3\rho^\frac{r-1}{4}$.

In both the cases the first inequality follows from Lemma \ref{bound}.
 Hence,
\begin{equation}\label{eqn}
  4(r+1)^8 >
  \begin{cases}
    {\rho}^r, & \text{if $r$ is $4k$}\\
    {\rho}^{r-1}, & \text{if $r$ is $4k+s$}
  \end{cases}
\end{equation}

The right hand part in inequality (\ref{eqn}) is an exponential function in $r$ and the left hand
part is a polynomial in $r$, for arbitrarily large $r$ none of the inequalities can be true. Hence we 
arrived at contradiction. Thus there exists an $r$ violating the condition.
\end{proof}
The following lemma suggests that the smallest $r$ violating inequality (\ref{eqn}) is bounded by a 
constant that depends only on $\epsilon$.
\begin{lemma}\label{boundonr}
 The smallest $r$ violating the inequality (\ref{eqn}) is bounded by $O(\frac{1}{\epsilon}\log \frac{1}{\epsilon})$.
\end{lemma}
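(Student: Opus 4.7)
The plan is to reduce inequality (\ref{eqn}) to a clean logarithmic form and then verify it with an ansatz of the form $r = C \cdot \frac{1}{\epsilon}\log\frac{1}{\epsilon}$. Concretely, taking natural logarithms of both branches of (\ref{eqn}), a value of $r$ violates (\ref{eqn}) as soon as
\[
\ln 4 + 8\ln(r+1) \leq (r-1)\ln(1+\epsilon),
\]
since this is the harder of the two cases (the case $r=4k$ differs only by $\ln(1+\epsilon)$ on the right). Thus it suffices to exhibit some $r$ of the claimed order that fulfils this inequality; by Lemma \ref{violate} we already know a finite violating $r$ exists, so this will pin down its magnitude.

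Next, I would replace $\ln(1+\epsilon)$ by the elementary lower bound $\ln(1+\epsilon)\geq \epsilon/2$, valid for $\epsilon \in (0,1]$, so that the target reduces to
\[
\ln 4 + 8\ln(r+1) \leq \frac{(r-1)\epsilon}{2}.
\]
Substituting the ansatz $r = C\cdot \frac{1}{\epsilon}\log\frac{1}{\epsilon}$, the right-hand side evaluates to $\frac{C}{2}\log\frac{1}{\epsilon} + O(1)$, while the left-hand side expands as $8\log\frac{1}{\epsilon} + 8\log\log\frac{1}{\epsilon} + O(\log C)$. Choosing any constant $C > 16$ ensures that the coefficient of the dominant $\log\frac{1}{\epsilon}$ term on the right strictly exceeds that on the left, so the inequality holds for all $\epsilon$ sufficiently small. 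Hence the smallest violating $r$ is at most $C\cdot \frac{1}{\epsilon}\log\frac{1}{\epsilon} = O\!\left(\frac{1}{\epsilon}\log\frac{1}{\epsilon}\right)$.

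The only mild obstacle is the implicit self-reference: $r$ appears both as the quantity we are bounding and inside the $\log(r+1)$ term we must control. However, since $\log$ grows so slowly, plugging any polynomial-in-$\frac{1}{\epsilon}\log\frac{1}{\epsilon}$ ansatz into $\log(r+1)$ produces only $\log\frac{1}{\epsilon} + \log\log\frac{1}{\epsilon} + O(1)$ on the right, which is easily absorbed by picking $C$ large. No delicate optimization is required, and the bound is robust to the constant factor $4$ and the $(r+1)^8$ exponent coming from Lemma \ref{bound}; only the ratio of the polynomial exponent ($8$) to the linear coefficient in $r\epsilon$ ($1/2$) determines the threshold on $C$.
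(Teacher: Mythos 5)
Your proposal is correct and follows essentially the same route as the paper: both substitute the ansatz $r = \Theta\bigl(\frac{1}{\epsilon}\log\frac{1}{\epsilon}\bigr)$ into the logarithm of inequality (\ref{eqn}), invoke the bound $\log(1+\epsilon) > \epsilon/2$, and observe that the slow growth of $\log(r+1)$ lets a sufficiently large constant (the paper's condition $\log 2c + 2 < c/16$ matches your threshold $C>16$) close the argument. Your explicit handling of the harder branch $\rho^{r-1}$ is a minor, harmless refinement over the paper's presentation.
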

\begin{proof}
% Lemma \ref{violate} suggests that the smallest $r$ violating the condition cannot 
% be sufficiently large but a constant. 
Let $\hat{r}$ be the smallest $r$ violating the
inequalities in (\ref{eqn}). Using the inequalities (i) 
$\log(1+\epsilon)>\frac{\epsilon}{2}$ for $0< \epsilon < 1$, (ii) $\log x < x$ for $x>1$, 
and (iii) $\log \frac{1}{\epsilon} \geq 1$ for $\epsilon \leq \frac{1}{10}$, we show 
$\hat{r}\leq O(\frac{1}{\epsilon}\log \frac{1}{\epsilon})$. Let 
$x=\frac{c}{\epsilon}\log \frac{1}{\epsilon}$. Consider the inequality 
$4(x+1)^8\leq 4(2x)^8\leq (1 + \epsilon)^x$. The former inequality trivially holds for the choice of $x$ 
and for any $\epsilon>0$, and taking the logarithm on both sides of the latter inequality, we 
get $\frac{\log 4 + 8\log 2x}{x}\leq \log(1+\epsilon)$. By inequality (i), now, it suffice to show 
that $\frac{\log 4+8\log 2x}{x}\leq \frac{\epsilon}{2}$. Using the inequalities (ii) and (iii), the choice 
of $x$ satisfies the inequality for any constant $c$ satisfying $\log 2c + 2 < \frac{c}{16}$. 
\end{proof}

\begin{lemma}\label{rtime}
 For a given $v\in V$, liar's dominating set $D_{opt}(N_i)$ of $N_i$ can be computed 
 in polynomial time.
\end{lemma}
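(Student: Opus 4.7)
The plan is to show that $|D_{opt}(N_i)|$ is bounded by a constant depending only on $\epsilon$, and then obtain $D_{opt}(N_i)$ by exhaustive search.

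The first step is to bound $|D_{opt}(N_i)|$. By construction of Algorithm \ref{lds}, $N_i \subseteq U_i \subseteq N_G^{\hat{r}+4}[v]$. I would observe that any LDS of a set $A$ is automatically an LDS of every subset $B\subseteq A$, since the two defining conditions quantify over $A$ and therefore over $B$ as well. Combining this with Lemma \ref{bound} applied to the neighborhood $N_G^{\hat{r}+4}[v]$ (which furnishes an LDS of size $O((\hat{r}+4)^2)$) and Lemma \ref{boundonr} (which gives $\hat{r}=O(\frac{1}{\epsilon}\log\frac{1}{\epsilon})$), I would conclude
\[
|D_{opt}(N_i)| \;\leq\; |D(N_G^{\hat{r}+4}[v])| \;=\; O\!\Big(\tfrac{1}{\epsilon^2}\log^2\tfrac{1}{\epsilon}\Big) \;=:\; c_\epsilon,
\]
a constant independent of $|V|$.

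Given this bound, I would enumerate every subset $D\subseteq V$ with $|D|\leq c_\epsilon$ and, for each candidate, verify in $O(|N_i|^2\cdot c_\epsilon)$ time whether (i) $|N_G[u]\cap D|\geq 2$ for every $u\in N_i$, and (ii) $|(N_G[u]\cup N_G[w])\cap D|\geq 3$ for every pair of distinct vertices $u,w\in N_i$. The smallest $D$ passing both tests is returned as $D_{opt}(N_i)$. Since $\sum_{k=0}^{c_\epsilon}\binom{|V|}{k}=O(|V|^{c_\epsilon})$ and each verification is polynomial, the total running time is $O(|V|^{c_\epsilon+O(1)})$, which is polynomial in $|V|$ for every fixed $\epsilon$.

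The only delicate point is the initial size bound: it depends on the monotonicity of LDS membership under subset restriction (a direct check against the two defining conditions), and on invoking Lemma \ref{boundonr} with $\hat{r}+4$ in place of $\hat{r}$, which only affects lower-order terms. Once this is established, the brute-force step is entirely routine, so the lemma is essentially a bookkeeping statement tying together the constant-size bound on the neighborhoods produced by Algorithm \ref{lds} with the trivial exhaustive search.
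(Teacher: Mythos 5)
Your proposal is correct and follows essentially the same route as the paper: bound $|D_{opt}(N_i)|$ by a constant depending only on $\epsilon$ via Lemma \ref{bound} and Lemma \ref{boundonr}, then find the optimum by brute-force enumeration of constant-size candidate sets with polynomial-time verification. Your explicit justification that an LDS of $N_G^{\hat{r}+4}[v]$ restricts to an LDS of the subset $N_i$ is a detail the paper glosses over, but the argument is the same.
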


\begin{proof}
 Note that $N_i\subseteq N_G^{r+4}[v]$. The size of a liar's dominating set 
 $D(N_i)$ of $N_i$ is bounded by $O(r^2)$ (by Lemma \ref{bound}). Again, 
 $r=O(\frac{1}{\epsilon}\log \frac{1}{\epsilon})$ by Lemma \ref{boundonr}. 
 Therefore, the size of the minimum size liar's dominating set $D_{opt}(N_i)$ 
 of $N_i$ is bounded by a constant. The process of checking whether a given 
 set is a liar's dominating set or not can be done in polynomial-time. 
 Therefore, we can consider every subset of $N_i$ as a possible liar's 
 dominating set and check whether it is a liar's dominating set or not 
 in polynomial-time. Finally, the minimum size liar's dominating set is 
 reported. 
\end{proof}

\begin{lemma}
 For the collection of neighborhoods $\{N_1,N_2,\ldots,N_k\}$
 created by Algorithm \ref{lds}, the union ${\cal D}=\bigcup_{i=1}^k D(N_i)$
 is a liar's dominating set in $G$.
\end{lemma}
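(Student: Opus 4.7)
The plan is to verify the two defining conditions of liar's domination for ${\cal D} = \bigcup_{i=1}^k D(N_i)$ directly, using the fact that each $D(N_i)$ is an LDS of $N_i$ in $G$ (by the same argument as Lemma \ref{bound}) together with a distance-based case analysis for the pair condition.

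First, I would establish a covering observation: the outer while-loop of Algorithm \ref{lds} terminates only when $V_{i+1} = V_i \setminus N_i$ becomes empty, so every vertex of $V$ lies in at least one $N_i$. The first LDS condition then follows at once: for any $v \in V$, pick $i$ with $v \in N_i$ and invoke the LDS property of $D(N_i)$ to obtain $|N_G[v] \cap D(N_i)| \geq 2$, hence $|N_G[v] \cap {\cal D}| \geq 2$.

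For the second condition on a distinct pair $u, v \in V$, I would split on $\delta_G(u,v)$. If $\delta_G(u,v) \geq 3$, the closed neighborhoods $N_G[u]$ and $N_G[v]$ are disjoint, so applying the first condition to each endpoint gives $|(N_G[u] \cup N_G[v]) \cap {\cal D}| \geq 2+2 \geq 3$. If $\delta_G(u,v) \leq 2$, the cleanest route is to exhibit a single index $i$ with $\{u,v\} \subseteq N_i$, whereupon the pair clause of the LDS property of $D(N_i)$ supplies the three dominators directly. To find such an $i$, I would take the iteration capturing $u$: when $u \in S_i = N_G^{\hat{r}_i}[v_i]$, the triangle inequality yields $\delta_G(v, v_i) \leq \hat{r}_i + 2 \leq \hat{r}_i + 4$, forcing $v \in U_i$; moreover, the isolated-components correction in lines \ref{algo:if_start}--\ref{algo:if_end} cannot delete $v$ from $U_i$, because any vertex it removes is adjacent to a vertex of $V_i \setminus U_i$ sitting at distance greater than $\hat{r}_i + 4$ from $v_i$, which is incompatible with $\delta_G(v, v_i) \leq \hat{r}_i + 2$.

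The main obstacle is the residual subcase where neither $u$ nor $v$ lies in the core $S_i$ of its capturing iteration but only in the 4-buffer $N_i \setminus S_i$: then the triangle-inequality bound slackens to $\hat{r}_i + 6$, $v$ may escape $U_i$, and $u, v$ can end up in distinct $N_i$ and $N_j$. I would handle this by a direct overlap analysis between $D(N_i)$ and $D(N_j)$: any shared dominator lies in $N_i \cap N_j$, and the $4$-separation of the cores $S_i, S_j$ together with the bounded-independence property of UDGs (each unit disk admits only $O(1)$ independent vertices) should force at least three distinct vertices of $(N_G[u] \cup N_G[v]) \cap {\cal D}$ to survive, completing the verification of the pair condition and hence the proof that ${\cal D}$ is an LDS of $G$.
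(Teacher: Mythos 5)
Your first condition and the distance-$\geq 3$ subcase of the second condition are fine and match the paper's counting. The problem is the case you yourself flag as ``the main obstacle'': two vertices $u,v$ with $\delta_G(u,v)\leq 2$ that land in different blocks $N_i\neq N_j$. Your primary strategy --- forcing such a pair into a common $N_i$ --- does not go through: a vertex need not lie in the core $S_i$ of its capturing iteration (the $N_i$ partition $V$, the $S_i$ do not cover it), and even when $u\in S_i$ the partner $v$ may already have been absorbed into an earlier block $N_j$ and removed from $V_i$, so it cannot be claimed for $N_i$. Your fallback, an ``overlap analysis'' invoking $4$-separation of the cores and bounded independence, is only asserted to ``should force'' three surviving dominators; no actual count is given, so the key case of the lemma is left unproved.

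The paper closes exactly this case by a much more elementary argument that you could adopt: since each $D(N_i)$ is an LDS of $N_i$ \emph{with respect to $G$}, every vertex is double-dominated by the $D$ of its own block, and one then splits on whether $u\in D(N_i)$ and whether $v\in D(N_j)$. For instance, if $u\in D(N_i)$ and $v\in D(N_j)$, then $u$, $v$, and a second dominator of $u$ already give three distinct vertices of $(N_G[u]\cup N_G[v])\cap{\cal D}$; if $u\notin D(N_i)$ and $v\in D(N_j)$, the two dominators of $u$ in $D(N_i)$ together with $v$ suffice; and similarly in the remaining subcase. No geometry, no separation of the $S_i$, and no common-block claim is needed for the pair condition --- the $4$-separation is used only later, in Lemma~\ref{4-sc}, to bound the cost against the optimum. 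To repair your write-up, replace the ``exhibit a common $N_i$'' step and the overlap sketch with this membership case analysis (and note that your distance-$\geq 3$ shortcut can be kept or simply folded into it).
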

%\vspace*{-0.5cm}

\begin{proof}
 We first prove that for every $v\in V, |N_G[v]\cap{\cal D}|\geq 2$. Observe that
 $\bigcup_{i=1}^k N_i=V$ as $V_{i+1}=V_i\setminus N_i$
 and $N_i \subseteq V_i$. Thus, every vertex $v \in N_i$ for some
 $1\leq i \leq k$. By Lemma \ref{bound}, $|N_G[v]\cap{\cal D}|\geq 2$ is satisfied.

 Now we prove the second condition. Let $u,v \in V$ be any two arbitrary vertices. 
 The following cases may arise.\\
 %\begin{description}
  {\bf Case 1.} $u,v\in N_i$ for some $1\leq i \leq k$\\
  Since $D(N_i)$ is the liar's dominating set of $N_i$ in $G$, we
  have, $|(N_G[u]\cup N_G[v])\cap D(N_i)|\geq 3$ for every
  $u,v\in N_i$. Hence, $|(N_G[u]\cup N_G[v])\cap {\cal D}|\geq 3$
  for every $u,v\in V$.\\
  {\bf Case 2.} $u\in N_i$ and $v\in N_j$ for some $i\neq j$ and $1\leq i,j\leq k$\\   
  If $u$ and $v$ are not adjacent in $G$, the proof is trivial. Hence,
  we assume that $(u,v)\in E$ i.e., $u$ and $v$ are adjacent
  in $G$. Now the following sub-cases may arise.\\  
  %\begin{enumerate}[label=\alph*.]
   (a) $u\in D(N_i)$ and $v\in D(N_j)$\\
    Observe that $|N_G[u]\cap D(N_i)|\geq 2$ and $|N_G[v]\cap D(N_j)|\geq 2$
    as $D(N_i)$ and $D(N_j)$ are liar's dominating sets of $N_i$ and $N_j$,
    respectively. Hence, $u$ has a neighbor, say $w$, in $D(N_i)$,
    similarly $v$ has also a neighbor, say $x$, in $D(N_j)$. However,
    maybe $w=x$ or maybe not. In either case
    $|(N_G[u]\cup N_G[v])\cap {\cal D}|\geq 3$ holds.\\
  (b) $u \notin D(N_i)$ and $v\in D(N_j)$ (the other case proof is similar)\\
   Since $D(N_i)$ is a liar's dominating set of $N_i$, we have 
   $|N_G[u]\cap D(N_i)|\geq 2$. Hence, $|(N_G[u]\cup N_G[v])\cap {\cal D}|\geq 3$ 
   is true as $v$ is part of the solution.\\
  (c) $u\notin D(N_i)$ and $v\notin D(N_j)$ (The proof is similar to the previous cases).   
 % \end{enumerate}
 %\end{description}
\end{proof}

\begin{corollary}\label{cor_main}
 For the collection $N=\{N_1,N_2,\ldots,N_k\}$ created by Algorithm \ref{lds}, 
 the union ${\cal D}^*= \bigcup_{i=1}^kD_{opt}(N_i)$ is a liar's 
 dominating set.
\end{corollary}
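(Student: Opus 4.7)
The plan is to observe that the proof of the preceding lemma (which showed $\bigcup_{i=1}^k D(N_i)$ is an LDS of $G$) used only the fact that $D(N_i)$ is a valid liar's dominating set of $N_i$ in $G$, and never appealed to how Algorithm \ref{algo:lds_apprx} actually constructed it. Since $D_{opt}(N_i)$ is, by definition, also a liar's dominating set of $N_i$ in $G$ (the minimum one), the same argument carries through with $D_{opt}(N_i)$ in place of $D(N_i)$.

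Concretely, I would first record that $\bigcup_{i=1}^k N_i = V$, which is immediate from the update rule $V_{i+1} \leftarrow V_i \setminus N_i$ in Algorithm \ref{lds} together with termination when $V_{i+1}=\emptyset$. This guarantees that every vertex of $V$ lies in at least one $N_i$. Using this, the first condition is verified as follows: for any $v \in V$, pick an index $i$ with $v \in N_i$; since $D_{opt}(N_i)$ is an LDS of $N_i$, $|N_G[v] \cap D_{opt}(N_i)| \geq 2$, and hence $|N_G[v] \cap {\cal D}^*| \geq 2$.

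For the second condition I would mimic the case analysis from the previous lemma. If $u,v \in N_i$ for some $i$, then $|(N_G[u] \cup N_G[v]) \cap D_{opt}(N_i)| \geq 3$ directly, because $D_{opt}(N_i)$ is an LDS of $N_i$. If $u \in N_i$ and $v \in N_j$ with $i \neq j$, the argument splits according to whether $u,v$ lie in $D_{opt}(N_i)$ and $D_{opt}(N_j)$, respectively: in each subcase, combining the local first-condition bounds $|N_G[u] \cap D_{opt}(N_i)| \geq 2$ and $|N_G[v] \cap D_{opt}(N_j)| \geq 2$ with disjointness of $N_i \cap D_{opt}$ and $N_j \cap D_{opt}$ yields $|(N_G[u] \cup N_G[v]) \cap {\cal D}^*| \geq 3$ exactly as in the previous lemma.

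I do not expect any serious obstacle: the work has already been done in the preceding lemma, and the corollary is really a remark that the argument depended only on each $D(N_i)$ being an LDS, a property $D_{opt}(N_i)$ also satisfies. The only mildly delicate point is the subcase in the cross-neighborhood analysis where both $u \notin D_{opt}(N_i)$ and $v \notin D_{opt}(N_j)$; there, $u$ has two dominators in $D_{opt}(N_i)$ and $v$ has two in $D_{opt}(N_j)$, and since these dominators lie in disjoint sets $N_i$ and $N_j$, at least three distinct vertices of ${\cal D}^*$ appear in $N_G[u] \cup N_G[v]$, completing the proof.
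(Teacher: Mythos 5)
Your proof is correct and takes essentially the same route as the paper, which states the corollary without proof precisely because the preceding lemma's argument uses only that each $D(N_i)$ is a liar's dominating set of $N_i$ in $G$ --- a property $D_{opt}(N_i)$ shares by definition. The one point you rightly flag as delicate, disjointness of the dominator sets in the cross-neighborhood subcase, rests on $D_{opt}(N_i)\subseteq N_i$ together with $N_i\cap N_j=\emptyset$, and this is justified since the paper computes $D_{opt}(N_i)$ by enumerating subsets of $N_i$ (Lemma \ref{rtime}).
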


\begin{theorem}
 For a given UDG, $G=(V,E)$, and an $\epsilon > 0$, 
 we can design a $(1+\epsilon)$-factor approximation algorithm 
 to find an LDS in $G$ with running time $n^{O(c^2)}$, where 
 $c=O(\frac{1}{\epsilon}\log \frac{1}{\epsilon})$.
\end{theorem}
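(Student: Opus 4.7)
The plan is to run Algorithm~\ref{lds} with $\rho = 1+\epsilon$, interpreting the internal calls to $D(\cdot)$ as the exact optimum $D_{opt}(\cdot)$ computed by brute-force subset enumeration. The output will be $\mathcal{D}^{*} = \bigcup_{i=1}^{k} D_{opt}(N_i)$, and I would argue that it is both a feasible LDS of $G$ and a $(1+\epsilon)$-approximation, produced in $n^{O(c^{2})}$ time.

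Feasibility of $\mathcal{D}^{*}$ is immediate from Corollary~\ref{cor_main}, so I would turn first to the approximation ratio and invoke Lemma~\ref{lem_ptas}. The hypothesis to supply is the per-block inequality $|D_{opt}(N_i)| \leq \rho\,|D_{opt}(S_i)|$, which is exactly what the termination condition of the inner while loop on line~\ref{condition} guarantees: at the stopping radius $\hat{r}$ the inequality $|D_{opt}(N_G^{\hat{r}+4}[v])| > \rho\,|D_{opt}(N_G^{\hat{r}}[v])|$ fails, and since $S_i = N_G^{\hat{r}}[v]$ and $N_i \subseteq N_G^{\hat{r}+4}[v]$, the required bound follows. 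Lemma~\ref{lem_ptas} then delivers $|\mathcal{D}^{*}| \leq (1+\epsilon)\,|D_{opt}(V)|$.

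For the running time I would bound the work per outer iteration and multiply by the iteration count. The outer loop runs at most $|V|$ times since each round strictly shrinks $V_{i+1}$. Within a round, Lemma~\ref{violate} certifies that the inner loop terminates, Lemma~\ref{boundonr} forces $\hat{r} \leq c = O(\tfrac{1}{\epsilon}\log\tfrac{1}{\epsilon})$, and Lemma~\ref{bound} gives $|N_G^{\hat{r}+4}[v]| = O(c^{2})$. Computing $D_{opt}$ on such a neighborhood by enumerating all of its subsets is legitimized by Lemma~\ref{rtime} and costs $2^{O(c^{2})}\cdot\mathrm{poly}(n) = n^{O(c^{2})}$, which dominates both the packing-style expansion of $r$ (at most $O(c)$ steps per round) and the book-keeping in the if-block of Algorithm~\ref{lds}.

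The step that I would handle with the most care is the clean substitution of the exact $D_{opt}(\cdot)$ for the approximate $D(\cdot)$ used literally in the pseudo-code: the approximation analysis is anchored to optimum sizes throughout, whereas a naive reading of the while-loop guard would compare $7.31$-approximations and fail to produce a tight $(1+\epsilon)$ bound. Lemma~\ref{rtime} makes the substitution legal because each local neighborhood has size depending only on $\epsilon$, so brute-force optimization is polynomial in $n$; once this replacement is in place, the approximation and running-time guarantees assemble mechanically from the preceding lemmas.
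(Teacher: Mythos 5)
Your proposal follows the paper's proof essentially step for step: feasibility of ${\cal D}^{*}=\bigcup_{i=1}^{k}D_{opt}(N_i)$ from Corollary~\ref{cor_main}, the ratio from Lemma~\ref{lem_ptas} via the per-block inequality $|D_{opt}(N_i)|\leq\rho\,|D_{opt}(S_i)|$ extracted from the stopping rule of the inner loop, and the running time from Lemmas~\ref{violate}, \ref{boundonr} and \ref{rtime}. Your insistence on reading the loop guard with the exact optima $D_{opt}(\cdot)$ rather than the $7.31$-approximate $D(\cdot)$ of Algorithm~\ref{algo:lds_apprx} is in fact \emph{more} careful than the paper, whose pseudo-code tests approximate sizes while its correctness argument silently needs the inequality for optimal sizes; with the approximate guard one only obtains $|D_{opt}(N_i)|\leq 7.31\,\rho\,|D_{opt}(S_i)|$, so your substitution (harmless for Lemmas~\ref{violate} and \ref{boundonr}, since $|D_{opt}(\cdot)|\leq|D(\cdot)|$) is the right repair.

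One step of your running-time argument is wrong as stated: Lemma~\ref{bound} bounds the size of a liar's dominating set of $N_G^{\hat{r}+4}[v]$ by $O(c^{2})$, not the size of $N_G^{\hat{r}+4}[v]$ itself, which can be $\Theta(n)$ even for $\hat{r}=1$ (a dense cluster of disk centers). Consequently you cannot enumerate \emph{all} subsets of $N_i$ in $2^{O(c^{2})}\cdot\mathrm{poly}(n)$ time. What Lemma~\ref{rtime} actually licenses is enumerating only the subsets of $N_i$ of cardinality at most $O(c^{2})$ (since an optimal LDS of $N_i$ has at most that many vertices), of which there are $n_i^{O(c^{2})}$; this is where the $n^{O(c^{2})}$ in the theorem comes from. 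Your final bound is therefore correct, but for the reason given in Lemma~\ref{rtime}, not for the reason you give.
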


\begin{proof}
Note that Algorithm \ref{lds} generates the collection of sets 
$S=\{S_1,S_2,\ldots,S_k\}$ and $N=\{N_1,N_2,\ldots,N_k\}$ such that 
$S$ is a 4-separated collection of $V$ with $S_i\subseteq N_i$ 
for each $i\in\{1,2,\ldots,k\}$ and  $\bigcup_{i=1}^k N_i=V$ 
with $N_i\cap N_j=\emptyset$ for $i \neq j$.
Corollary \ref{cor_main} suggests that ${\cal D}^*=\bigcup_{i=1}^kD_{opt}(N_i)$ 
is a liar's dominating set of $G$.  The approximation bound follows from 
Lemma \ref{4-sc}, and Lemma \ref{lem_ptas}. 
Let $|N_i|=n_i$ for $1 \leq i \leq k$. By Lemma \ref{rtime}, an optimal 
liar's dominating set $D_{opt}(N_i)$ of $N_i$ can be computed 
in $n_i^{O(\frac{1}{\epsilon^2}\log \frac{1}{\epsilon})}$ time. Therefore, 
the total running time to compute ${\cal D}^*$ is 
$\sum_{i=1}^kn_i^{O(\frac{1}{\epsilon^2}\log 
\frac{1}{\epsilon})}\leq n^{O(\frac{1}{\epsilon^2}\log \frac{1}{\epsilon})}$.
\end{proof}

%\vspace*{-0.25cm}
\section{Conclusion}\label{conclusion}
%\vspace{-0.25cm}
In this article, we studied the minimum liar's dominating set problem (MLDS) in UDGs. We proved that the decision version of the MLDS problem is NP-complete. We proposed a simple 7.31-factor approximation algorithm and a PTAS for the problem.
We believe that it is possible to get much better approximation ratios by exploring inherent geometric properties of UDGs.
As a future direction, we work on it and hope to design such algorithms for the problem.
\bibliographystyle{abbrv}
\bibliography{Cocoon_Journal}

\end{document}